\def\isarxiv{1} %%% for icml submission version, we comment this line
\definecolor{mydarkblue}{rgb}{0,0.08,0.45}
\theoremstyle{plain}
\newtheorem{theorem}{Theorem}[section]
\newtheorem{lemma}[theorem]{Lemma}
\newtheorem{definition}[theorem]{Definition}
\newtheorem{assumption}[theorem]{Assumption}
\newcommand{\wh}{\widehat}
\newcommand{\wt}{\widetilde}
\newcommand{\R}{\mathbb{R}}
\renewcommand{\S}{\mathsf{S}}
\renewcommand{\v}{\mathsf{v}}
\newcommand{\m}{\mathsf{m}}
\renewcommand{\d}{\mathsf{d}}
\renewcommand{\H}{\mathsf{H}}
\DeclareMathOperator{\OPT}{OPT}
\DeclareMathOperator{\poly}{poly}
\DeclareMathOperator{\nnz}{nnz}
\DeclareMathOperator{\diag}{diag}
\newcommand*{\RN}[1]{\expandafter\@slowromancap\romannumeral #1@}
\begin{document}

\ifdefined\isarxiv

\date{}

\title{When Can We Solve the Weighted Low Rank Approximation Problem in Truly Subquadratic Time?}
\author{ 
Chenyang Li\thanks{\texttt{
lchenyang550@gmail.com}. Fuzhou University.}
\and
Yingyu Liang\thanks{\texttt{
yingyul@hku.hk}. The University of Hong Kong. \texttt{
yliang@cs.wisc.edu}. University of Wisconsin-Madison.} 
\and
Zhenmei Shi\thanks{\texttt{
zhmeishi@cs.wisc.edu}. University of Wisconsin-Madison.}
\and 
Zhao Song\thanks{\texttt{ magic.linuxkde@gmail.com}. The Simons Institute for the Theory of Computing at the UC, Berkeley.}
}

\else

% If your paper is accepted and the title of your paper is very long,
% the style will print as headings an error message. Use the following
% command to supply a shorter title of your paper so that it can be
% used as headings.
%
\runningtitle{When Can We Solve the Weighted Low Rank Approximation Problem in Truly Subquadratic Time?}

% If your paper is accepted and the number of authors is large, the
% style will print as headings an error message. Use the following
% command to supply a shorter version of the authors names so that
% they can be used as headings (for example, use only the surnames)
%
%\runningauthor{Surname 1, Surname 2, Surname 3, ...., Surname n}

\twocolumn[

\aistatstitle{When Can We Solve the Weighted Low Rank Approximation Problem in Truly Subquadratic Time?}

\aistatsauthor{ 
Chenyang Li$^{1}$
% \thanks{{\tt \small yingyul@hku.hk}. {\tt\small yliang@cs.wisc.edu}.}
\And 
Yingyu Liang$^{2,3}$
% \thanks{{\tt \small yingyul@hku.hk}. {\tt\small yliang@cs.wisc.edu}.} 
\And  
Zhenmei Shi$^{3}$
% \thanks{{\tt \small zhmeishi@cs.wisc.edu}.}
\And 
Zhao Song$^{4}$
% \thanks{{\tt \small magic.linuxkde@gmail.com}.}
}

\aistatsaddress{ 
$^1$Fuzhou University. \qquad
$^2$The University of Hong Kong. \qquad
$^3$University of Wisconsin-Madison. 
\qquad
\\
$^4$The Simons Institute for the Theory of Computing at the University of California, Berkeley. 
} 
]

% $^4$Adobe Research, USA. 
%   \qquad 
%   $^2$University of Wisconsin-Madison, USA. 
%   \\
%   $^\varheart$The University of Hong Kong, HongKong. 
%   \qquad
%   $^1$Tsinghua University, China. 
%   \\
%   $^3$The Simons Institute for the Theory of Computing at the University of California, Berkeley, USA. 

\fi

\ifdefined\isarxiv
\begin{titlepage}
  \maketitle
  \begin{abstract}
The weighted low-rank approximation problem is a fundamental numerical linear algebra problem and has many applications in machine learning. Given a $n \times n$ weight matrix $W$ and a $n \times n$ matrix $A$, the goal is to find two low-rank matrices $U, V \in \mathbb{R}^{n \times k}$ such that the cost of $\| W \circ (U V^\top - A) \|_F^2$ is minimized. Previous work has to pay $\Omega(n^2)$ time when matrices $A$ and $W$ are dense, e.g., having $\Omega(n^2)$ non-zero entries. In this work, we show that there is a certain regime, even if $A$ and $W$ are dense,  we can still hope to solve the weighted low-rank approximation problem in almost linear $n^{1+o(1)}$ time.

  \end{abstract}
  \thispagestyle{empty}
\end{titlepage}

{\hypersetup{linkcolor=black}
%\tableofcontents
}
\newpage

\else

\begin{abstract}

\end{abstract}

\fi

\section{INTRODUCTION}\label{sec:intro}
Weighted low-rank approximation problem~\citep{sj03} is a fundamental numerical linear algebra problem related to matrix completion~\citep{llr16}, faster SVD decomposition~\citep{bjs14,bks16}, and principal component analysis~\citep{xxf+21}. 
It is a prevalent focus over recent years and has been widely applied in machine learning in broad practical applications, such as vision detection~\citep{cxt22}, representation learning~\citep{wwz+19}, image classification~\citep{pcp+23}, language model~\citep{hhc+21}, weather prediction~\citep{wtl18}, and many more.
We can formulate the weighted low-rank approximation problem as below.
\begin{definition}
Given two $n \times n$ matrices $A$ and $W$, the goal is to find two $n \times k$ matrices $U,V$ such that $\| W \circ (UV^\top - A) \|_F^2 \leq (1+\epsilon) \min_{\{A' | \mathrm{rank}(A')=k \}} ~ \| W \circ (A' - A ) \|_F^2$. Here, $W\circ B$ denotes a matrix with entries given by $W_{i,j} B_{i,j}$, where $\circ$ is the Hadamard product operation.
\end{definition}

Here, we provide an example to show that attention computation can be formulated as a weighted low-rank approximation problem. 
Let $Q, K, V \in \mathbb{R}^{n \times d}$ be the query, key, and value matrix. Let $A := \mathsf{Activation}(Q K^\top) \in \mathbb{R}^{n \times n}$ be the attention matrix, e.g, using $\mathsf{Softmax}$ as activation. Let $W \in \mathbb{R}^{n \times n}$ be the attention mask. Then, we have the attention output as $(A \circ W) V$. We can clearly see that $A \circ W$ is the target of the weighted low-rank approximation problem.

There is some well-known hardness in the weighted low-rank approximation problem. \cite{gg11,rsw16} showed that, when we want to find an approximate solution with $\epsilon = 1/\poly(n)$, the weighted low-rank approximation problem is ``equivalent '' to $\mathsf{NP}$-$\mathsf{hard}$ problem.   
To practically fix it, under certain assumptions (e.g., the weighted matrix $W$ has $r$ distinct columns/rows for small $r$), we may have a polynomial time complexity algorithm to solve the weighted low-rank approximation problem.
Thus, there are many theoretical and empirical works studying how we are able to solve the weighted low-rank approximation problem efficiently (e.g., \cite{ev12,zqz+19,syyz23,wy24} and many more).  

However, the best previous works still need to solve weighted low-rank approximation in $\Omega(n^2)$ if $A$ has $\Omega(n^2)$ non-zero entries and $W$ has $\Omega(n^2)$ non-zero-entries. 
Therefore, in this work, we raise a natural fundamental mathematical question:
\begin{center}
{\it
Is there some dense regime setting for $A$ and $W$ so that we can solve the weighted low-rank approximation problem in almost linear $n$ time?
}
\end{center}

The answer is positive. 
When we have mild practical assumptions (see detail in Assumption~\ref{ass:W}), we can solve the weighted low-rank approximation problem in almost linear complexity, i.e., $n^{1+o(1)}$. We state our main results in the following, whose proof is in Section~\ref{sec:recover_solution}.

\begin{theorem}[Main result] \label{thm:main}
Let  $A$ and $ W$ denote two $n \times n$ matrices. 
Assume each entry of $A$ and $W$ needs $n^{\gamma}$ bits\footnote{Each entry in a matrix can be represented by $n^\gamma$ bits. In a real dataset, if we use the float32 format, then as long as $n^\gamma \ge 32$, our assumption holds.} to represent, where $\gamma = o(1)$.
Let $r$ be the number of distinct columns and rows of $W$ (Definition~\ref{def:distinct}).  
Assume $A \circ W$ has at most $r \cdot p$ distinct columns and at most $r \cdot p$ distinct rows, where $p = n^{o(1)}$. 
Assume that $k^2r= O(\log n / \log \log n)$. For every constant $\epsilon>0$, there is an algorithm running in time,   
$n^{1+o(1)}$ time which outputs a factorization (into an $n \times k$ and a $k \times n$ matrix) of a rank-$k$ matrix $\wh{A}$ for which  
\begin{align*}
\|W \circ(A-\wh{A})\|_{F}^{2} \leq(1+\epsilon) \OPT,
\end{align*}
with probability at least $0.99$, where 
\begin{align*}
    \OPT:= \min_{\{A_k| \mathrm{rank}(A_k)=k \}} \| W \circ ( A - A_k ) \|_F^2.
\end{align*}
\end{theorem}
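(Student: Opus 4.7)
The plan is to follow the polynomial-system paradigm used in earlier work on weighted low-rank approximation (most notably the Razenshteyn--Song--Woodruff style argument) and combine it with a structural compression that exploits the distinct-column/row assumptions on $W$ and $A\circ W$. First I would exploit the hypothesis that $W$ has at most $r$ distinct columns and $r$ distinct rows to partition $[n]$ into $r$ column-groups and $r$ row-groups. On each such block the weight is constant, so the objective $\|W\circ(UV^\top-A)\|_F^2$ decomposes into a sum of block-level Frobenius norms. The further assumption that $A\circ W$ has at most $rp$ distinct columns and rows (with $p=n^{o(1)}$) means that, after collapsing equivalent indices, all the data relevant to the objective is captured by a ``compressed'' problem of size roughly $(rp)\times(rp)$, which is $n^{o(1)}\times n^{o(1)}$ once $r$ is absorbed. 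Building this compression is a single pass over the input and takes $n^{1+o(1)}$ time.

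Next I would reduce the optimization to a low-dimensional polynomial system. Following the guess-sketch technique, I would write $U=SX$ and $V=TY$ for suitably chosen sketching matrices $S,T$ of dimension proportional to $k$, so that the optimal $U,V$ are determined by $O(k^2 r)$ unknowns rather than $\Theta(nk)$. After substituting into the blockwise form of the objective, the normal equations on each block become a polynomial system in these $O(k^2 r)$ variables, with coefficients read off from the compressed matrix. Any feasible $(X,Y)$ within $(1+\epsilon)$ of the optimum certifies a valid factorization, so it suffices to find an $\epsilon$-approximate solution to this polynomial system.

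I would then invoke a decision-procedure for the existential theory of the reals (Renegar, or the Grigoriev--Vorobjov solver) on this system. Its running time is $(\#\text{polys}\cdot\text{deg})^{O(\#\text{vars})}$ multiplied by a polynomial factor in the bit complexity. Because the number of variables is $O(k^2 r)=O(\log n/\log\log n)$, the base of the exponential is raised to a power that is exactly small enough to give $n^{o(1)}$, while the bit-complexity factor is $\poly(n^\gamma)=n^{o(1)}$ under the $\gamma=o(1)$ assumption on the input representation. Combining these yields an overall cost of $n^{1+o(1)}$: the $n$ comes from the input read and the final lifting, and everything else is subpolynomial.

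The step I expect to be delicate, and the main obstacle, is the interaction between the bit-complexity of the intermediate solution and the claimed approximation guarantee. The polynomial solver returns a point of some bit length $L$, and we must both (i) keep $L=n^{o(1)}$ so that the lift $U=SX$, $V=TY$ can be written down and multiplied through in $n^{1+o(1)}$ time, and (ii) round the returned $X,Y$ without degrading the $(1+\epsilon)$ guarantee. Standard tools (perturbation bounds for the objective, Cauchy's root separation, and truncation to $\poly(\log(1/\epsilon),n^\gamma)$ bits) should suffice, but these estimates must be threaded carefully through the compression--solve--lift pipeline so that the final error really is at most $(1+\epsilon)\OPT$ with probability $0.99$, where the randomness comes from the choice of the sketches $S,T$.
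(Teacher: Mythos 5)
Your high-level architecture coincides with the paper's: collapse the objective using the $r$ distinct rows/columns of $W$ and the $rp$ distinct rows/columns of $A\circ W$ (so that the cost function can be written down in $n\cdot rp\cdot\poly(kr/\epsilon)$ time, which is the content of Theorem~\ref{thm:few_distinct_columns}), parametrize the optimum by $O(rk^2/\epsilon)$ sketched unknowns via the guess-a-sketch trick, and feed the resulting low-dimensional system to a decision procedure for the existential theory of the reals (Theorem~\ref{thm:decision_problem}), whose $(\m\d)^{O(\v)}\poly(\H)$ cost is $n^{o(1)}$ precisely because $k^2r=O(\log n/\log\log n)$ and $\gamma=o(1)$.

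There is, however, a genuine gap: you never say how a \emph{feasibility} oracle becomes an \emph{approximate minimization} routine, and you locate the delicate step in the wrong place. The paper's mechanism is binary search on the cost value, and the number of binary-search iterations is $\log(C_{\mathrm{upper}}/c_{\mathrm{lower}})$; the nontrivial ingredient is the \emph{lower bound on the minimum nonzero value of} $\OPT$. To get it, the closed-form optimizers $\wh{U}_1,\wh{U}_2$ (which involve pseudo-inverses of the sketched $k\times s_1$ blocks) must first be turned into a ratio $p(x)/q(x)$ of integer polynomials via Cramer's rule together with $2^{O(rk)}$ guesses of ranks and maximal independent column subsets, and then into a bona fide polynomial system by adjoining a variable $y$ with $q(x)y=1$; only then does the Jeronimo--Perrucci--Tsigaridas bound (Theorem~\ref{thm:jpt13}) apply and give $\OPT\ge 2^{-n^\gamma 2^{\wt{O}(rk^2/\epsilon)}}$ (Theorem~\ref{thm:lower_bound_on_cost}), which combined with the trivial upper bound $\poly(n)\cdot 2^{n^\gamma}$ (Lemma~\ref{lem:opt}) caps the number of oracle calls at $n^{o(1)}$. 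Your proposed substitutes (objective perturbation bounds, Cauchy root separation, truncation) address rounding of an already-found point, not the termination of the search when $\OPT$ is nonzero but astronomically small, so the argument as written does not close. Relatedly, the ``final lifting'' should be made concrete as in Theorem~\ref{thm:recover_solution}: the entries of $U,V$ are recovered one at a time by further binary searches against the same decision oracle, each time adding an equality constraint to the semi-algebraic set, which simultaneously controls the bit length of the output you were worried about.
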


Theorem~\ref{thm:main} shows that, with high probability, we can solve the weighted low-rank approximation problem in almost linear complexity when there is a small number of distinct rows and columns in the target matrix. Furthermore,  for any $\delta \in (0,0.1)$, if we conduct $O(\log(1/\delta))$ repetitions and take the median of results as the final output, our success probability can be boosted to $1-\delta$. Note that it is a standard way to boost the success probability from constant 0.99 to $1-\delta$ for any $\delta \in (0,0.1)$. The high-level idea is a majority vote, which is equivalent to taking the median.

\paragraph{Comparison with Previous Works. }
Our results are beyond \cite{rsw16} in the following sense. Under the condition in Theorem~\ref{thm:main}, their algorithm requires $(\nnz(A) + \nnz(W )) \cdot n^{o(1)} +
n^{1+o(1)}$ running time complexity, where $\nnz(A)$ means number of non-zero entries in $A$.  When $A$ or $W$ are dense matrix, their complexity is $n^{2+o(1)}$, while our complexity is $n^{1+o(1)}$. 
We use a similar high-level proof framework as \cite{rsw16}. However, as our problem setup changes, all analyses need to be updated accordingly. In particular, Theorem~\ref{thm:few_distinct_columns} in Section~\ref{sec:few_distinct_columns} is brand new.
The high-level intuition is that we need to carefully handle the distinct patterns in the analysis to avoid the quadratic time complexity. Then, during the matrix product, we can rewrite the summation order. Then, by fine-grained analysis, we can group many operations together due to the few distinct patterns and bypass the quadratic complexity.

{\bf Roadmap.} 
In Section~\ref{sec:related}, we provide related work of weighted low-rank. In Section~\ref{sec:prelim}, we provide the preliminary of our problem setup and some tools from previous works. Then, in Section~\ref{sec:additive}, we provide a warm up of our analysis to introduce our technique. 
In Section~\ref{sec:relative}, we provide the lower bound of cost used in binary search. 
In Section~\ref{sec:few_distinct_columns}, we introduce how to estimate the $\OPT$ value. 
In Section~\ref{sec:recover_solution}, we can successfully recover our solutions. 
Finally, we state our conclusion in Section~\ref{sec:conclusion}. %%% Section 1. Introduction
\section{RELATED WORK}\label{sec:related}

\subsection{Weighted Low Rank Applications}
Weighted low-rank techniques have been widely applied in broad machine learning applications such as grayscale-thermal detection~\cite{lwz+16}, object detection~\cite{twzl16,cxt22}, fault detection~\cite{dcz+17}, defect detection~\cite{mww+20,jld+20}, background estimation~\cite{dl17,dlr18}, multi-task learning~\cite{flct18}, robust visual representation learning~\cite{kjn15,wzx+18,wwz+19}, adversarial learning~\cite{lbb+19}, image restoration~\cite{psdx14,cyz+20}, image clustering and classification~\cite{sjs+11,wl17,wllz18,fzcw21,fzc+22,fzc+22b,kkk+23,pcp+23}, robust principal component analysis~\cite{xxf+21}, language models training~\cite{hhw+22}, language model compression~\cite{hhc+21}, weather prediction~\cite{wtl18}, tensor training~\cite{cwc+21,zwht+22}, domain generalization~\cite{smf+24} and many more.
Weighted low-rank techniques have also been widely used in signal processing for filter design and noise removal~\cite{lpw97,lhzc10,jycl15}. 

\subsection{Weighted Low Rank in Attention Mechanism} Particularly, a line of works shows that the attention matrix may have some low-rank property, even under softmax activation function by polynomial approximation methods, e.g., \cite{as23,kll+25_var,lls+25_prune,chl+24_rope,llss24_sparse,lls+24c,lss+24,lssy24,lssz24_tat,swxl24,xsl24,as24_iclr,as24_rope,as24b,hsk+24,hwg+24,hwl+24,hwsl24}. Thus, under such conditions, we may use weighted low-rank approximations for transformers' attention acceleration. For a few distinct columns or rows, empirically, we see that the attention matrix has some good patterns \cite{jlz+24,cls+25,lls+25_grok,cll+25_icl,smn+24}. In this work, we focus on the theoretical analysis and leave its empirical justification as future work.

\subsection{Weighted Low Rank Approximation and Acceleration}
Many previous works try to solve in an efficient way empirically and theoretically~\cite{mmh03,slv04,wj06,mv07,m08,ev12,mxzz13,mu14,rsw16,llr16,d16,dl17,dl17b,bwz19,hlx+19,zqz+19,swz+20,th21,yzls22,syyz23,zyls24}. Particularly, recently, \cite{wy24} proposes an algorithm to output a slightly higher rank output as a proxy to solve the weighted low-rank approximation problem efficiently.
\cite{llr16} develops an efficient framework for alternating minimization to get the weighted low-rank approximation. Similarly, \cite{syyz23} proposes a more robust framework to get the solution.

\subsection{Sketching for Numerical Linear Algebra}
In this work, we use the sketching technique to accelerate a submodular optimization problem. We provide a brief overview of prior sketching work across various domains. Sketching has been utilized to enhance numerous continuous optimization challenges, including linear programming \cite{cls19,song19,b20,jswz21,sy21,gs22}, empirical risk minimization \cite{lsz19,qszz23}, the cutting plane method \cite{jlsw20}, calculating the John Ellipsoid \cite{ccly19,syyz22}, and many more.
Beyond continuous optimization, sketching has also been applied to several discrete optimization issues \cite{dsw22,sxz22,z22,jlsz23}. Additionally, sketching concepts have been implemented in addressing theoretical problems in large language models, such as exponential and softmax regression \cite{lsz23,gsy23,dls23_softmax,llss24}, and reducing the feature dimension of attention matrices \cite{dms23}. Sketching techniques prove valuable in various machine learning tasks, including matrix completion \cite{gsyz23}, adversarial training \cite{gqsw22}, training over-parameterized neural tangent kernel regression \cite{bpsw21,szz21,z22,als+22,hswz22}, matrix sensing \cite{dls23_sensing,qsz23}, kernel density estimation \cite{qrs+22}, and federated learning \cite{swyz23}. Moreover, the application of sketching extends to the theory of relational databases \cite{qjs+22}.
\section{PRELIMINARY}\label{sec:prelim}

We define $[n]:= \{1,2, \dots, n\}$.
Let $\R$ denote the real numbers, and $\R_{\geq 0}$ denote the nonnegative real numbers. Let $\|A\|$   denote the spectral norm of matrix $A$. Let $\|A\|_{F}^{2}=\sum_{i, j} A_{i, j}^{2}$ denote the Frobenius norm of $A$. Let $W \circ A$ denote the entry-wise product of matrices $W$ and $A$. Let $\|A\|_{W}^{2}=\sum_{i, j} W_{i, j}^{2} A_{i, j}^{2}$ denote the weighted Frobenius norm of $A$. Let $\nnz(A)$ denote the number of nonzero entries of $A$. Let $\det(A)$ denote the determinant of a square matrix $A$. Let $A^{\top}$ denote the transpose of $A$. Let $A^{\dagger}$ denote the Moore-Penrose pseudo-inverse of $A$. Let $A^{-1}$ denote the inverse of a full rank square matrix $A$.

For the weight matrix $W$, we always use $W_{*,j}$ to denote the $j$-th column of $W$, and $W_{i,*}$ to denote the $i$-th row of $W$. Let $\diag(W_{*,j})$ denote the diagonal matrix with entries from the vector $W_{*,j}$. Let $\diag(W_{i,*} )$ denote the diagonal matrix with entries from the  vector $W_{i,*}$.

\begin{definition}[Distinct rows and columns]\label{def:distinct}
Given a matrix $R \in m \times n$, we have $R_{*,1}, R_{*,2}, \dots, R_{*,n} \in \R^{m}$ as its column vectors, and $R_{1,*}^\top, R_{2,*}^\top, \dots, R_{m,*}^\top \in \R^{n}$ as its row vectors. We define $r:= |\{ R_{*,1}, R_{*,2}, \dots, R_{*,n} \}|$ and $p := |\{ R_{1,*}^\top, R_{2,*}^\top, \dots, R_{m,*}^\top \}|$, where $|\cdot|$ means the cardinality of a set, i.e., the number of elements in a set. Then, we say $R$ has  $r$ distinct columns and $p$ distinct rows.
\end{definition}

\begin{assumption}\label{ass:W}
Let $r, p \in \mathbb{N}_+$.  We assume $n \times n$ matrix $W$ has $r$ distinct columns and rows. We also assume that $W\circ A$ has $rp$ distinct columns and rows.    
\end{assumption}

In a real application, when the data has strong periodicity, e.g., signal processing, or uniformly draws from a small set, e.g., classification, or the data has a certain structure, e.g., block attention mask in Transformers \citep{vsp+17,flexattn}, the matrix may have a small number of distinct columns or rows.

Let $\mathbb{Z}[x_1, x_2, \cdots,x_v]$ denote the set of multi-variable polynomial with $v$ variables and the coefficients are from $\mathbb{Z}$. Furthermore, if all the coefficients of a polynomial are from $\{0,1, \cdots, 2^H\}$, then we say the bitsize of coefficients of this polynomial is $H$.

\begin{lemma}[Cramer's rule]
Let $R$ be an $n \times n$ invertible matrix. Then, for each $i \in [n]$, $j \in [n]$,
\begin{align*}
    (R^{-1})_{i,j} = \det( R_{ -j }^{-i} ) / \det(R),
\end{align*}
where $R_{ -j }^{-i}$ is the matrix $R$ with the $i$-th row and $j$-th column removed.
\end{lemma}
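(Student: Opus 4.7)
The plan is to prove the identity by constructing the adjugate (classical adjoint) matrix and verifying the standard relation $R \cdot \mathrm{adj}(R) = \det(R) \cdot I_n$. Define entrywise $\mathrm{adj}(R)_{i,j} := (-1)^{i+j} \det(R_{-j}^{-i})$, i.e., the transpose of the cofactor matrix (note the swap of $i$ and $j$ in the minor is crucial). Once the identity $R\cdot\mathrm{adj}(R) = \det(R) I_n$ is in hand, invertibility of $R$ gives $\det(R)\neq 0$, so dividing yields $R^{-1} = \det(R)^{-1}\mathrm{adj}(R)$, and reading off the $(i,j)$ entry delivers the stated formula (the paper's notation absorbs the sign $(-1)^{i+j}$ into $\det(R_{-j}^{-i})$).

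First I would unpack the $(i,k)$ entry of the product $R\cdot \mathrm{adj}(R)$, which equals
\begin{align*}
\sum_{j=1}^{n} R_{i,j}\,(-1)^{k+j}\det(R_{-j}^{-k}).
\end{align*}
When $k=i$, this is exactly the Laplace cofactor expansion of $\det(R)$ along the $i$-th row, so the sum equals $\det(R)$. When $k\neq i$, I would interpret the same sum as the cofactor expansion along the $k$-th row of an auxiliary matrix $\widetilde{R}$ obtained from $R$ by replacing its $k$-th row with a copy of its $i$-th row; since $\widetilde{R}$ has two equal rows, the alternating property of the determinant forces $\det(\widetilde{R})=0$. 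This gives the off-diagonal vanishing needed to complete $R\cdot\mathrm{adj}(R) = \det(R) I_n$. A symmetric argument (expanding down columns) gives the companion identity $\mathrm{adj}(R)\cdot R = \det(R) I_n$, confirming that $\det(R)^{-1}\mathrm{adj}(R)$ is a genuine two-sided inverse and not merely a right inverse.

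Extracting the $(i,j)$ entry of $R^{-1} = \det(R)^{-1}\mathrm{adj}(R)$ then gives
\begin{align*}
(R^{-1})_{i,j} \;=\; \frac{(-1)^{i+j}\det(R_{-j}^{-i})}{\det(R)},
\end{align*}
which is the lemma's claim under the stated sign convention.

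The main obstacle is purely notational rather than conceptual: one must carefully track that the $(i,j)$ entry of $\mathrm{adj}(R)$ uses the minor with row $j$ and column $i$ removed (not row $i$ and column $j$), and one must handle the sign $(-1)^{i+j}$ consistently. The only two substantive algebraic facts used, the Laplace cofactor expansion along a row and the vanishing of the determinant of a matrix with two equal rows, are standard textbook results and can be invoked without re-derivation.
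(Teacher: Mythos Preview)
The paper does not supply a proof of this lemma; Cramer's rule is quoted as a standard preliminary fact and is used later only qualitatively (to conclude that entries of $(P_{1,i}P_{1,i}^\top)^{-1}$ and $(P_{2,i}P_{2,i}^\top)^{-1}$ are ratios of polynomials in the original entries). Your argument via the adjugate identity $R\cdot\mathrm{adj}(R)=\det(R)\,I_n$ together with Laplace cofactor expansion is the classical textbook proof and is correct in substance.

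One small bookkeeping slip: your stated definition $\mathrm{adj}(R)_{i,j}:=(-1)^{i+j}\det(R_{-j}^{-i})$ does not match your subsequent computation of the $(i,k)$ entry of $R\cdot\mathrm{adj}(R)$, which (correctly) uses $\mathrm{adj}(R)_{j,k}=(-1)^{j+k}\det(R_{-j}^{-k})$, i.e.\ the minor with row $k$ and column $j$ removed. The computation is the right one, so simply swap the subscript and superscript in your definition to make it consistent. You are also right that the paper's statement suppresses the sign $(-1)^{i+j}$ (and in fact transposes the minor relative to the standard convention), but since the paper only invokes the lemma to assert that inverse entries are rational functions of the data, nothing downstream is affected.
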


\subsection{Polynomial System Decision Solver}

Here, we formally state the theorem of the decision problem. 
For a full discussion of
algorithms in real algebraic geometry, we refer the reader to \cite{bpr5}
and \cite{b14}.
\begin{theorem}[Decision Problem \citep{r92a,r92b,bpr96}]\label{thm:decision_problem}
Given a real polynomial system $P (x_{1}, x_{2}, \cdots, x_{\v} )$ having $\v$ variables and $\m$ polynomial constraints $f_{i} (x_{1}, x_{2}, \cdots, x_{\v} ) \Delta_{i} 0, \forall i \in[\m]$, where $\Delta_{i}$ is any of the ``standard relations'': $\{>, \geq,=, \neq, \leq ,<\}$, let $\d$ denote the maximum degree of all the polynomial constraints and let $\H$ denote the maximum bitsize of the coefficients of all the polynomial constraints.  
Then in
\begin{align*}
(\m \d)^{O(\v)} \poly(\H)
\end{align*}
time, one can determine whether a real solution exists for the polynomial system $P$.
\end{theorem}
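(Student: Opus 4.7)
The plan is to follow the critical point method developed by Renegar and later refined by Basu-Pollack-Roy, rather than the classical cylindrical algebraic decomposition (which would yield doubly-exponential complexity in $\v$). At a high level, I would reduce the decision problem to producing one sample point in each connected component of the semi-algebraic set $S = \{x \in \R^{\v} : f_i(x) \Delta_i 0, \forall i \in [\m]\}$, and then testing the signs of the $f_i$ at those sample points. If any sample point satisfies all constraints, $S$ is nonempty; otherwise it is empty.

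The first step is preprocessing: replace each strict inequality $f_i > 0$ and each $f_i \neq 0$ by appropriate conjunctions/disjunctions of non-strict inequalities using an infinitesimal $\varepsilon$ in the real-closed extension $\R\langle\varepsilon\rangle$, and combine the $\m$ constraints into a single polynomial equation plus sign conditions. This reduces the problem to sign determination of a family of polynomials on the real variety of a single equation $F = 0$, where $\deg F = O(\d)$ and the polynomial family has $O(\m)$ members. The second step is the critical point construction: for a generic distance function $\|x - x_0\|^2$ (introduced via a second infinitesimal to guarantee genericity), form the Lagrange-type system whose solutions are the critical points of the distance on $\{F = 0\}$. By a theorem of Milnor/Thom, every connected component of a bounded smooth real variety contains such a critical point, and the infinitesimal deformation handles unboundedness and singularities.

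The third step is to bound the complexity. The critical point system is a polynomial system in $\v$ variables of degree $O(\d)$, and by the Bezout bound it has at most $\d^{O(\v)}$ complex solutions. To sieve these solutions by the sign pattern of the $\m$ polynomials, one enumerates $(\m\d)^{O(\v)}$ candidate sign conditions and for each either produces a witness via univariate representation (e.g., a rational univariate representation obtained from resultants/subresultant pseudo-remainder sequences) or certifies emptiness. The arithmetic cost of each such computation is polynomial in the bitsize $\H$ of the coefficients, because all intermediate integers appearing in subresultant chains and determinantal formulas are bounded by $2^{\poly(\H)}$ when the input coefficients have bitsize $\H$. Summing up yields the stated $(\m\d)^{O(\v)}\poly(\H)$ bound.

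The main obstacle, which is exactly what makes the sharp $(\m\d)^{O(\v)}$ bound nontrivial rather than the naive $(\m\d)^{O(\v^2)}$, is ensuring that the infinitesimal deformations and critical point systems genuinely enumerate one witness per connected component while simultaneously keeping the number of systems solved linear (rather than combinatorially exploding) in $\m$. This is handled by Renegar's device of working with a single well-chosen perturbed polynomial and extracting sign information via parameterized univariate representations, together with the Ben-Or/Kozen/Reif style sign-determination scheme refined by Basu-Pollack-Roy; the careful bookkeeping of Puiseux expansions in the two infinitesimals is where nearly all of the technical work resides. Since the statement here is cited verbatim from \cite{r92a,r92b,bpr96}, I would invoke those results directly rather than re-derive the infinitesimal machinery.
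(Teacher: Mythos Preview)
Your proposal is fine, and in fact more than what the paper does: this theorem is stated in the preliminaries purely as a black-box tool, cited to \cite{r92a,r92b,bpr96} with no proof or sketch given in the paper at all. Your final sentence---invoking those results directly---is exactly what the paper does, and the preceding sketch of the critical point method, while a reasonable outline of Renegar and Basu--Pollack--Roy, is not required here.
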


\subsection{Lower Bound on Cost}
The key result we used for proving the lower bound is the
following bound on the minimum value attained by an integer
polynomial restricted to a compact connected component
of a basic closed semi-algebraic subset of $\R^{\v}$ (see details in Definition~\ref{def:real_field}) defined by polynomials with integer coefficients in terms of the degrees
and the bitsizes of the coefficients of the polynomials involved. 

\begin{theorem}[\cite{jpt13}] \label{thm:jpt13}
Let $T := \{x \in \R^{\v} \mid f_{1}(x) \geq 0, \cdots ,f_{\ell}(x) \geq 0, f_{\ell+1}(x)=0, \cdots, f_{\m}(x)=0 \}$ be defined by polynomials $f_{1}, \cdots, f_{\m} \in \mathbb{Z} [x_{1}, \cdots, x_{\v} ]$ 
with $\v \geq 2$, degrees bounded by an even integer $\d$ and coefficients of absolute value at most $\H$ bitsize, and let $C$ be a compact connected component of $T$. Let $g \in \mathbb{Z} [x_{1}, \cdots, x_{\v} ]$ (here $g$ can be viewed as an objective function) be a polynomial of degree at most $\d$ and coefficients of absolute value bounded by $\H$. Let $\wt{\H}:=\max \{\H, 2 \v+2 \m\}$. Then, the minimum value that $g$ takes over $C$ satisfies that if it is not zero, then its absolute value is $\geq $  
\begin{align*}
 2^{- 2^{3\v \log(\d)} \log( \wt{\H} )}.
\end{align*}
\end{theorem}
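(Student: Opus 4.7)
\textbf{Proof proposal for Theorem~\ref{thm:jpt13}.}
The plan is to reduce the problem to a univariate root separation estimate for a polynomial with integer coefficients, then bound the height and degree of that univariate polynomial in terms of $\v,\d,\wt{\H}$.

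First, I would show that the minimum is attained at a critical point of a suitable Lagrangian system. Since $C$ is compact and $g$ is continuous, $\mu := \min_{x \in C} g(x)$ is attained at some $x^\star \in C$. Partition the set of indices $\{1,\dots,\m\}$ according to which constraints are active at $x^\star$: the equalities $f_{\ell+1}=\cdots=f_\m=0$ are always active, and among $f_1,\dots,f_\ell$ let $I \subseteq \{1,\dots,\ell\}$ record those with $f_i(x^\star)=0$. On the stratum $S_I := \{x : f_i(x)=0 \text{ for } i \in I \cup \{\ell+1,\dots,\m\}, f_j(x)>0 \text{ for } j \in \{1,\dots,\ell\}\setminus I\}$, the point $x^\star$ is a local minimizer of $g$ restricted to $S_I$, so by the Lagrange conditions there exist multipliers $\lambda_i$ with
\begin{align*}
\nabla g(x^\star) = \sum_{i \in I \cup \{\ell+1,\dots,\m\}} \lambda_i \nabla f_i(x^\star).
\end{align*}
Together with the active equalities $f_i(x^\star)=0$, this is a polynomial system in $x$ and $\lambda$ of degree at most $\d$ in each polynomial, with integer coefficients of bitsize at most $\wt{\H}$. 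To avoid the pathology of a positive-dimensional critical set, I would first perturb $g$ by a generic infinitesimal $\varepsilon \cdot \sum_i x_i$, or equivalently apply the standard Thom-Whitney stratification / polar-variety trick used in \cite{jpt13} and \cite{bpr5}, so that the Lagrange system has only finitely many solutions on each stratum.

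Second, I would apply an effective elimination step. Introducing a new variable $T$ representing the value $g(x)$, I add the equation $T - g(x) = 0$ to the above polynomial system and eliminate $x$ and $\lambda$. By standard effective elimination (multivariate resultants, u-resultants, or the geometric-resolution construction), this produces a nonzero univariate polynomial $P(T) \in \Z[T]$ such that $P(\mu) = 0$. Using Bezout-type estimates, the degree of $P$ is bounded by a product of the degrees of the input polynomials, giving $\deg P \leq \d^{O(\v)}$. Using effective height bounds for eliminants (e.g.\ Philippon's or Krick-Pardo-Sombra-type height bounds, which are the technical workhorse in \cite{jpt13}), the coefficients of $P$ have bitsize at most $\d^{O(\v)} \log \wt{\H}$. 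Taking the worst case over the $2^\ell \leq 2^\m$ choices of active set $I$ absorbs cleanly into $\wt{\H} := \max\{\H, 2\v + 2\m\}$.

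Third, I would finish by Cauchy's classical root lower bound. If $P(T) = a_D T^D + \cdots + a_0$ has integer coefficients and $\mu \ne 0$ is a nonzero root, then $|\mu| \geq 1 / (1 + \max_i |a_i|/|a_0|)$, so in particular
\begin{align*}
|\mu| \;\geq\; 2^{-O(\text{bitsize}(P))} \;\geq\; 2^{-\d^{O(\v)} \log \wt{\H}} \;\geq\; 2^{-2^{3\v \log(\d)} \log \wt{\H}},
\end{align*}
which matches the claimed bound after absorbing constants into the $3$ in the exponent.

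\textbf{Main obstacle.} The hard part is step two: producing the univariate polynomial $P(T)$ together with sharp, explicit bounds on its degree and coefficient height. A naive resultant construction blows up the height by an extra factor of $\d^{O(\v)}$ that would overwhelm the estimate, and one also has to handle the case when the Lagrange system has positive-dimensional components, where the minimum is not isolated. The resolution, which is the real content of \cite{jpt13}, is to use a Noether-normalization / generic-projection argument to put the critical locus in general position, then apply a height inequality for the Chow form (or equivalently the geometric resolution of a zero-dimensional variety). All other steps—compactness, Lagrange multipliers, and Cauchy's bound—are classical and fit together cleanly once this effective elimination ingredient is in place.
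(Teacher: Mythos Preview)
The paper does not prove Theorem~\ref{thm:jpt13} at all: it is quoted verbatim from \cite{jpt13} and used purely as a black box in Sections~\ref{sec:additive}--\ref{sec:recover_solution}. There is therefore nothing in the paper to compare your argument to; the authors never open this box.

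As a separate remark on your sketch: the three-step outline (Lagrange/KKT to a zero-dimensional critical system, effective elimination to a univariate $P(T)\in\Z[T]$ with controlled degree and height, then a Cauchy-type root lower bound) is indeed the shape of the argument in \cite{jpt13}, and you have correctly identified the genuine difficulty as the height control in the elimination step. Two small caveats. First, in your Cauchy bound you implicitly assume the constant term $a_0\neq 0$; when $\mu\neq 0$ but $a_0=0$ you must first factor out the power of $T$, which is harmless but should be said. Second, handling positive-dimensional critical loci by ``perturb by $\varepsilon\sum_i x_i$'' is delicate: one must argue that the limiting minimum as $\varepsilon\to 0$ still satisfies a polynomial relation with the \emph{same} degree and height bounds, which is exactly where the deformation/limit arguments in \cite{jpt13} do real work. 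None of this is a gap in the sense of a wrong approach, but the proposal is a sketch of the original source rather than something the present paper supplies.
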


\subsection{Upper Bound on Cost}\label{sec:upper_bound}
Now, we provide the upper bound on the $\OPT$, which is defined below. 

\begin{definition}\label{def:opt}
Given $A, W \in \R^{n \times n}$ and $k \in [n]$, we define $\OPT$ as
    \begin{align*}
        \OPT := \min _{U \in \R^{n \times k}, V \in \R^{n \times k}}\|(U V^\top-A) \circ W\|_F^2 .
    \end{align*}
\end{definition}

\begin{lemma}[folklore]\label{lem:opt}
Let $A, W, \OPT$ be defined in Definition~\ref{def:opt}. Let each entry of $A,W$ can be represented by $n^{\gamma}$ bits, with $\gamma \in (0,1)$. Then, we have
\begin{align*}
\OPT \leq \poly(n) \cdot 2^{n^{\gamma}}.
\end{align*}
\end{lemma}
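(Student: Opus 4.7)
The plan is to prove this upper bound by exhibiting an explicit feasible pair $(U^*, V^*)$ and directly bounding the resulting objective value, rather than doing anything clever. Specifically, I will take $U^* = 0_{n \times k}$ and $V^* = 0_{n \times k}$, which are valid choices in the minimization in Definition~\ref{def:opt}. Since $U^* (V^*)^\top = 0_{n \times n}$, this immediately yields
\begin{align*}
\OPT \leq \|(U^*(V^*)^\top - A) \circ W\|_F^2 = \|A \circ W\|_F^2 = \sum_{i=1}^{n} \sum_{j=1}^{n} A_{i,j}^2 \, W_{i,j}^2.
\end{align*}

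Next, I will invoke the bit-size assumption entry by entry. Since each entry of $A$ and of $W$ can be represented by $n^{\gamma}$ bits, we have $|A_{i,j}| \leq 2^{n^{\gamma}}$ and $|W_{i,j}| \leq 2^{n^{\gamma}}$ for every $(i,j) \in [n] \times [n]$. Plugging these into the summand gives $A_{i,j}^2 W_{i,j}^2 \leq 2^{4 n^{\gamma}}$, and summing over the $n^2$ entries yields
\begin{align*}
\OPT \leq \|A \circ W\|_F^2 \leq n^2 \cdot 2^{4 n^{\gamma}} = \poly(n) \cdot 2^{O(n^{\gamma})},
\end{align*}
which is the claimed bound up to an absolute constant in the exponent.

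The only minor subtlety is the constant factor of $4$ appearing in the exponent. Since the statement writes $2^{n^{\gamma}}$ and $\gamma$ is treated as an asymptotic parameter (with $\gamma = o(1)$ in the regime of Theorem~\ref{thm:main}), this factor is absorbed under the standard convention that $2^{n^{\gamma}}$ stands for $2^{O(n^{\gamma})}$; alternatively one can slightly inflate $\gamma$ (still $o(1)$) to absorb the constant. I do not anticipate any real obstacle here, since the argument is just a direct magnitude bound on a trivial feasible solution.
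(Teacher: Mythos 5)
Your proposal is correct and matches the paper's proof, which is just the one-liner ``set $U$ and $V$ to be the zero matrices''; you have simply filled in the entry-wise magnitude bound and the resulting $n^2 \cdot 2^{4n^\gamma}$ estimate. Your remark about the constant $4$ in the exponent is a fair observation about the statement's slight imprecision, but it does not affect the argument.
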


\begin{proof}
Set $U$ and $V$ to be the zero matrices.
\end{proof}

\subsection{Sketching Tool}
 
We use a sketching tool from previous work.
\begin{lemma}[Theorem 3.1 in \cite{rsw16}]
Let $A^1, \ldots, A^m \in \R^{n \times k}$ be $m$ matrices of size $n \times k$. Let $b^1, \ldots, b^m \in \R^{n \times 1}$ be $m$ column vectors of dimension $n$. For $1 \leq i \leq m$ denote: $x^i= \arg\min_{ x \in \R^{k} } \|A^i x-b^i \|_2^2$ the solution of the $i$-th regression problem. Let $S \in \R^{t \times n}$ be a random matrix with i.i.d. Gaussian entries with zero mean and standard deviation $1 / \sqrt{t}$. For $1 \leq i \leq m$ denote $ y^i= \arg\min_{y \in \R^{k}} \|S A^i y-S b^i \|_2^2 $  the solution of the $i$-th regression problem in the sketch space. We claim that for every $0<\epsilon<1 / 2$, with high probability, one can set $t=O(k / \epsilon)$ such that: 
\begin{align*}
\sum_{i=1}^m \|A^i y^i-b^i \|_2^2 \leq(1+\epsilon) \cdot \sum_{i=1}^m \|A^i x^i-b^i \|_2^2.
\end{align*}
\end{lemma}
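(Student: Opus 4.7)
The plan is to reduce the multi-regression guarantee to two classical properties of a Gaussian sketch: a constant-factor subspace embedding for each column space $\mathrm{col}(A^i)$, and an ``approximate matrix product'' bound whose expected error is additive over the $m$ regressions. For each $i$, let $U^i \in \R^{n \times k}$ be an orthonormal basis for $\mathrm{col}(A^i)$ (so $A^i = U^i R^i$ for some $R^i \in \R^{k \times k}$), and decompose $b^i = A^i x^i + b^i_\perp$ with $b^i_\perp \perp \mathrm{col}(U^i)$. By the Pythagorean identity,
\begin{align*}
\|A^i y^i - b^i\|_2^2 = \|A^i(y^i - x^i)\|_2^2 + \|b^i_\perp\|_2^2,
\end{align*}
so, since $\sum_i \|b^i_\perp\|_2^2 = \sum_i \|A^i x^i - b^i\|_2^2$ is exactly the total unsketched optimum, the whole claim reduces to proving $\sum_i \|A^i(y^i - x^i)\|_2^2 \le \epsilon \sum_i \|b^i_\perp\|_2^2$.

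The sketched normal equations give $(SA^i)^\top SA^i y^i = (SA^i)^\top Sb^i$; substituting $Sb^i = SA^i x^i + Sb^i_\perp$ and simplifying via $A^i = U^i R^i$ collapses to the clean identity
\begin{align*}
A^i(y^i - x^i) = U^i \big( (SU^i)^\top SU^i \big)^{-1} (SU^i)^\top S b^i_\perp .
\end{align*}
Conditioning on the event $\mathcal{E}$ that $S$ is a $(1 \pm 1/2)$-subspace embedding for $\mathrm{col}(U^i)$ for every $i \in [m]$, one gets $(SU^i)^\top SU^i \succeq \tfrac{1}{2} I_k$, and because the columns of $U^i$ are orthonormal the bound
\begin{align*}
\|A^i(y^i - x^i)\|_2^2 \le 4 \, \|(U^i)^\top S^\top S b^i_\perp\|_2^2
\end{align*}
follows. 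It therefore suffices to control the aggregate $\sum_i \|(U^i)^\top S^\top S b^i_\perp\|_2^2$.

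Since $(U^i)^\top b^i_\perp = 0$, a direct second-moment calculation for a Gaussian $S$ with i.i.d.\ $\mathcal{N}(0, 1/t)$ entries yields
\begin{align*}
\E \big[ \|(U^i)^\top S^\top S b^i_\perp\|_2^2 \big] = \frac{k}{t} \, \|b^i_\perp\|_2^2 .
\end{align*}
Summing over $i$ and applying Markov's inequality \emph{at the aggregate level} gives $\sum_i \|(U^i)^\top S^\top S b^i_\perp\|_2^2 \le O(k/t) \cdot \sum_i \|b^i_\perp\|_2^2$ with constant probability; setting $t = \Theta(k/\epsilon)$ and rescaling $\epsilon$ by a constant then finishes the proof. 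The main obstacle is establishing the event $\mathcal{E}$ for all $m$ column spaces simultaneously, since a naive union bound over the $m$ regressions would cost a $\log m$ factor in $t$. The resolution is the oblivious subspace-embedding property of the Gaussian: every fixed $k$-dimensional subspace is $(1 \pm 1/2)$-preserved with failure probability $e^{-\Omega(t)}$, so a covering-net argument applied subspace by subspace, followed by a union bound over the $m$ regressions, absorbs the overhead into the constant in $t = \Theta(k/\epsilon)$ for the regime of interest (i.e.\ $t \gtrsim k + \log m$).
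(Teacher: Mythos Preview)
Your decomposition and the aggregate Markov bound on $\sum_i \|(U^i)^\top S^\top S b^i_\perp\|_2^2$ are correct, and that part is indeed $m$-free. The gap is the subspace-embedding step. You need $(SU^i)^\top SU^i \succeq \tfrac{1}{2} I_k$ for \emph{every} $i\in[m]$ simultaneously; with $t=\Theta(k/\epsilon)$ the per-subspace failure probability is $e^{-\Omega(t)}=e^{-\Omega(k/\epsilon)}$, so the union bound over the $m$ regressions only succeeds when $\log m = O(k/\epsilon)$. Your parenthetical ``regime of interest $t\gtrsim k+\log m$'' is exactly the extra hypothesis the lemma does \emph{not} make --- and in the paper's own application one has $m=n$ while $k^2 r=O(\log n/\log\log n)$, so $\log m$ can be far larger than $k/\epsilon$. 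Splitting the analysis into ``first get a deterministic embedding on $\mathcal{E}$, then bound the cross term'' therefore cannot give the stated $t=O(k/\epsilon)$ in general.

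The repair is to fold the two steps into a single expectation and exploit Gaussianity more fully. Since $U^i$ has orthonormal columns and $b^i_\perp\perp\mathrm{col}(U^i)$, rotational invariance makes $SU^i$ and $Sb^i_\perp$ \emph{independent}: $SU^i$ has i.i.d.\ $\mathcal{N}(0,1/t)$ entries and $Sb^i_\perp\sim\mathcal{N}(0,\tfrac{\|b^i_\perp\|_2^2}{t}I_t)$. From your identity $A^i(y^i-x^i)=U^i(SU^i)^\dagger Sb^i_\perp$, conditioning on $SU^i$ gives
\[
\E\big[\|A^i(y^i-x^i)\|_2^2 \,\big|\, SU^i\big]=\frac{\|b^i_\perp\|_2^2}{t}\,\tr\!\big(((SU^i)^\top SU^i)^{-1}\big).
\]
Now $t\,(SU^i)^\top SU^i$ is a standard $k\times k$ Wishart matrix with $t$ degrees of freedom, whose inverse has expected trace $k/(t-k-1)$; hence $\E[\|A^i(y^i-x^i)\|_2^2]=\tfrac{k}{t-k-1}\|b^i_\perp\|_2^2$. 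Summing over $i$ and applying Markov at the aggregate level (just as you did for the cross term) yields the claim with $t=O(k/\epsilon)$ and no dependence on $m$ whatsoever. The point is that the smallest singular value of $SU^i$ need not be uniformly bounded across all $i$; it only needs to be well-behaved \emph{in expectation}, and the inverse-Wishart moment delivers exactly that.
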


\subsection{Backgrounds on Semi-Algebraic Sets}\label{sec:preli_app}

The following real algebraic geometry definitions are needed when proving a lower bound for the minimum nonzero cost of our problem. For a full discussion, we refer the reader to \cite{bcr87}. Here, we use the brief summary by \cite{bpr5}.

\begin{definition}[\cite{bpr5}]\label{def:real_field}
Let $R$ be a real closed field. 
Given $x= (x_{1}, \cdots, x_{v} ) \in R^{v}, r \in R, r>0$, we denote
\begin{align*}
B_{v}(x, r) := & ~ \{y \in R^{v} |\| y-x \|^{2}<r^{2} \}  \mathrm{~~~ (open~ball), } \\
\bar{B}_{v}(x, r) := & ~ \{y \in R^{v}| \| y - x \|^{2} \leq r^{2} \}  \mathrm{~~~ (closed~ball) }.
\end{align*}

$A$ set $S \subset R^{v}$ is open if it is the union of open balls, i.e., if every point of $U$ is contained in an open ball contained in $U$.

A set $S \subset R^{v}$ is closed if its complement is open. Clearly, the arbitrary union of open sets is open, and the arbitrary intersection of closed sets is closed.

Semi-algebraic sets are defined by a finite number of polynomial inequalities and equalities.

A semi-algebraic set has a finite number of connected components, each of which is semi-algebraic. Here, we use the topological definition of a connected component, which is a maximal connected subset (ordered by inclusion), where connected means it cannot be divided into two disjoint nonempty closed sets.

A closed and bounded semi-algebraic set is compact.

A semi-algebraic set $S \subset R^{v}$ is semi-algebraically connected if $S$ is not the disjoint union of two non-empty semialgebraic sets that are both closed in S. Or, equivalently, $S$ does not contain a non-empty semi-algebraic strict subset which is both open and closed in $S$. 

A semi-algebraically connected component of a semi-algebraic set $S$ is a maximal semi-algebraically connected subset of $S$. 
\end{definition}

\section{WARMUP}\label{sec:additive}

In this section, to demonstrate the new technique, we prove the following theorem.

\begin{lemma}\label{lem:warmup}
Given two $n \times n$ size matrices $A$ and $W$, $1 \leq k \leq n$  such that: Let each entry of $A,W$ can be represented by $n^{\gamma}$ bits, with $\gamma \in (0,1)$; Let $\OPT$ be defined as Definition~\ref{def:opt}.  
Then, we can show
\begin{itemize}
\item {\bf Part 1.} There is an algorithm that runs in $2^{O(nk\log n)}$ time, and outputs a number $\Lambda$  such that $\OPT \leq \Lambda \leq 2 \OPT $. 
\item {\bf Part 2.} There is an algorithm that runs in $2^{O(n k \log n)}$ time and returns $U \in \R^{n \times k}$ and $V \in \R^{n \times k}$ such that $\| (UV^\top - A) \circ W \|_F^2 \leq 2 \OPT$.
\end{itemize}
\end{lemma}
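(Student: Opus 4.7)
The plan is to encode the problem as a polynomial feasibility question in the entries of $(U,V)$ and run the decision solver of Theorem~\ref{thm:decision_problem} inside a binary search. Set $x := (U,V) \in \R^{2nk}$ and write the cost as the polynomial
\begin{align*}
F(x) := \sum_{i, j \in [n]} W_{i,j}^2 \Big( \sum_{\ell \in [k]} U_{i,\ell} V_{j,\ell} - A_{i,j} \Big)^2,
\end{align*}
so $F$ has degree $\d = 4$ in $\v := 2nk$ variables, with coefficients of bitsize $O(n^\gamma)$ built from $W_{i,j}$ and $A_{i,j}$. Without loss of generality we may restrict attention to a compact bounding box: e.g., after a QR reparameterization assume $\|U\|_F \leq \sqrt{k}$, and then the relevant $V$ can be bounded by a $\poly(n) \cdot 2^{n^\gamma}$-type quantity via Lemma~\ref{lem:opt}, so that all relevant semi-algebraic sets are compact and the hypotheses of Theorem~\ref{thm:jpt13} apply.

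For Part 1, Lemma~\ref{lem:opt} gives the upper endpoint $\OPT \leq U_{\max} := \poly(n) \cdot 2^{n^\gamma}$. For a nonzero lower endpoint, apply Theorem~\ref{thm:jpt13} to the compact set carved out by the bounding box and $F(x) \leq U_{\max}$, taking $g = F$ as the objective; plugging in $\v = O(nk)$, $\d = 4$, and $\wt{\H} = O(n^\gamma + nk)$ yields that either $\OPT = 0$ or $\OPT \geq L_{\min} := 2^{-2^{O(nk)}}$. Hence $\log_2 \OPT$ lies in an integer interval of width $\Delta = 2^{O(nk)}$. Binary-search over integer exponents $t$ with $\Lambda := 2^t$, invoking Theorem~\ref{thm:decision_problem} at each step on the system $\{F(x) \leq 2^t\}$ intersected with the bounding box. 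Each decision call has $\v = 2nk$, $\m = O(1)$, $\d = 4$, and $\H = 2^{O(nk)}$ (dominated by $|t|$ after clearing denominators), so its cost is $(\m \d)^{O(\v)} \poly(\H) = 2^{O(nk)}$; with $O(\log \Delta) = O(nk)$ calls the total is $2^{O(nk)} \leq 2^{O(nk \log n)}$, and the smallest feasible $\Lambda = 2^{t^\star}$ satisfies $\OPT \leq \Lambda \leq 2\OPT$.

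For Part 2, given the $\Lambda$ from Part 1, the set $S := \{x : F(x) \leq \Lambda\}$ intersected with the bounding box is nonempty, and every $x \in S$ realizes cost at most $\Lambda \leq 2\OPT$. To extract an explicit $x \in S$, invoke the standard "sample a point in each semi-algebraically connected component" algorithm from real algebraic geometry (see, e.g., \cite{bpr5}) on $S$: it shares the complexity profile $(\m \d)^{O(\v)} \poly(\H) = 2^{O(nk)}$ and outputs rational (or rationally approximable) coordinates of bitsize $2^{O(nk)}$, which we reshape into the required $(U,V) \in \R^{n \times k} \times \R^{n \times k}$. The main obstacle is precisely this witness extraction: Theorem~\ref{thm:decision_problem} certifies existence but not a witness, so either a sample-point procedure or an entrywise binary-search with a precision analysis derived from Theorem~\ref{thm:jpt13} must be imported; a related subtlety is justifying that compactness holds so that Theorem~\ref{thm:jpt13} applies cleanly and the optimum is attained inside the chosen bounding box.
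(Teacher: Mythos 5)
Your proposal follows essentially the same route as the paper: encode the cost as a degree-$4$ polynomial in the $2nk$ entries of $(U,V)$ restricted to a bounding box, combine the trivial upper bound of Lemma~\ref{lem:opt} with the Theorem~\ref{thm:jpt13} lower bound on the nonzero minimum to bound the search range, binary-search with Theorem~\ref{thm:decision_problem} as the feasibility oracle, and recover $(U,V)$ by entrywise localization (the paper uses exactly the entrywise binary search you list as the fallback for witness extraction). Your refinements — searching over the exponent to cut the number of oracle calls and explicitly arguing compactness via a QR-type normalization — only tighten the same argument and stay within the stated $2^{O(nk\log n)}$ budget.
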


\begin{proof}

{\bf Proof of Part 1.}

We can create $2nk$ variables to explicitly represent each entry of $U$ and $V$. Let $g(x) =  \| W \circ ( U V^\top - A ) \|_F^2$. Let $L = 2^{n^{\gamma}}$. Then, we can write down a polynomial system (the decision problem defined in Theorem~\ref{thm:decision_problem})
\begin{align*}
   \min & ~  g(x) \\
    \mathrm{~s.t.~} & ~  U_{i,j} \in [-L,L], \forall i,j \\
    & ~ V_{i,j} \in [-L,L], \forall i,j
\end{align*}
Using Theorem~\ref{thm:jpt13}, we know the above system has
\begin{align*}
    \m = 2nk, \v = 2nk, \d = 4, \H = n^{\gamma}, \wt{\H} = n^{\gamma} + O(nk).
\end{align*}
The lower bound on $g(x)$ (if $g(x)$ is not zero) is going to be  
\begin{align*}
    c_{\mathrm{lower}} = & ~ 2^{-2^{3\v \log (\d)} \log (\wt{H}) } \\
    \geq & ~  ~ 2^{-2^{ O( nk )} \log(n^{\gamma} + nk) } \\
    \geq & ~  ~ 2^{-2^{O(nk \log n)}}.
\end{align*}

By Lemma~\ref{lem:opt}, we know the upper bound is $C_{\mathrm{upper}} = \poly(n) \cdot 2^{n^{\gamma}}$.
After knowing the lower bound and upper bound on cost, the number of binary search iterations is upper-bounded by
\begin{align*}
\log( \frac{C_{\mathrm{upper}}}{ c_{\mathrm{lower}} } ) 
= ~ \log( \frac{ \poly(n) 2^{n^{\gamma}} }{ 2^{-2^{ O(nk \log n)} } } ) 
\leq  ~ 2^{O(nk \log n)}.
\end{align*}

In each of the above iterations, we need to run Theorem~\ref{thm:decision_problem} with the system 
\begin{align*}
    \mathrm{~s.t.~} 
    & ~  g(x) \in [\Gamma_t, 2 \Gamma_t] , ~ U_{i,j} \in [-L, L] , ~ V_{i,j} \in [-L, L]
\end{align*}
with parameters,
\begin{align*}
\m = 2nk+1,  \v = 2nk, \d =4, \H = n^{\gamma}.
\end{align*}
Then, running time complexity is
\begin{align*}
 (\m \d)^{O(\v)} \cdot \poly(\H) 
 =  & ~ (10nk)^{O(nk)} \cdot \poly(n^{\gamma}) \\
 =  & ~ 2^{O( n k \log n) }.
\end{align*}

Thus, combining the number of iterations and time for each iteration, we can find the number $\Gamma \in [\OPT, 2\OPT]$.

{\bf Proof of Part 2.}

Next, similar to {\bf Part 1}, we need to repeat the binary search for $2nk$ times for each variable in $U$ and $V$, and each time, the number of total binary search steps is $n^{\gamma}$. Thus, we can output the $U, V$ in the same running time as finding $\Gamma$.
\end{proof}

In the next few sections, we will explain how to reduce the number of variables and how to reduce the number of constraints.
\section{LOWER BOUND ON OPT}\label{sec:relative}

We assume that $W$ has $r$ distinct rows and $r$ distinct columns. Then, we get rid of the dependence on $n$ in the degree.

\begin{theorem}[Implicitly in \cite{rsw16}]\label{thm:lower_bound_on_cost}
Assuming that $W$ has $r$ distinct rows and $r$ distinct columns, each entry of $A$ and $W$ needs $n^{\gamma}$ bits to represent. Assume $\OPT >0$. Then we know that, with high probability,  
\begin{align*}
    \OPT \geq 2^{-n^{\gamma} 2^{\wt{O}(rk^2/\epsilon)} }.
\end{align*}
\end{theorem}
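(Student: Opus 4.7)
The plan is to follow the framework used implicitly in \cite{rsw16}: first apply the sketching tool to compress the $n$ row-regressions for $V$ (given $U$), then exploit that the $r$ distinct columns/rows of $W$ force the sketched design matrices to take only $O(r)$ distinct values, and only \emph{then} invoke the real-algebraic lower bound of Theorem~\ref{thm:jpt13}. The net effect is to replace the $O(nk)$ variables of the warm-up (Lemma~\ref{lem:warmup}) with only $O(rk^2/\epsilon)$ variables; since the bound of Theorem~\ref{thm:jpt13} depends doubly exponentially on $\v\log \d$ but only logarithmically on $\H$, this is exactly the trade that yields the claimed improvement from $2^{-n^{\gamma} 2^{\Theta(nk\log n)}}$ to $2^{-n^{\gamma} 2^{\wt{O}(rk^2/\epsilon)}}$.

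Step 1 is sketching. For fixed $U$, the weighted objective decouples into $n$ independent regressions $\min_{x \in \R^k} \|\diag(W_{*,i})(Ux - A_{*,i})\|_2^2$ over $i \in [n]$; apply a single Gaussian sketch $S \in \R^{t \times n}$ with $t = O(k/\epsilon)$ so that by Theorem~3.1 of \cite{rsw16} the resulting surrogate cost $\wt{\OPT}(U)$ is a $(1+\epsilon)$-approximation of the true cost, uniformly in $U$, with high probability over $S$. Step 2 collapses by distinct patterns: setting $D_i := \diag(W_{*,i})$, the family $\{S D_i U : i \in [n]\}$ contains only $r$ distinct $t \times k$ matrices $Z_1, \ldots, Z_r$ because $W$ has only $r$ distinct columns. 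After the analogous sketching argument on the row-regressions for $U$ (given $V$), which uses the $r$ distinct rows of $W$, the sketched cost can be rewritten via Cramer's rule on the pseudo-inverses $(Z_c^\top Z_c)^{-1}$ appearing in the closed-form sketched minimizers as a polynomial of constant degree in the $r \cdot tk = O(rk^2/\epsilon)$ scalar entries of $Z_1, \ldots, Z_r$ together with an analogous set of low-dimensional auxiliary matrices on the other side.

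Step 3 plugs into Theorem~\ref{thm:jpt13}. After discretizing the Gaussian sketch $S$ to $n^{\gamma}$ bits (which preserves the sketching guarantee in probability up to negligible slack), the reduced polynomial system has parameters
\begin{align*}
\v = O(rk^2/\epsilon), \quad \m = \poly(r, k, 1/\epsilon), \quad \d = O(k), \quad \H = O(n^{\gamma}),
\end{align*}
so $\log \wt{\H} = \Theta(n^{\gamma})$ and Theorem~\ref{thm:jpt13} produces $\wt{\OPT} \geq 2^{-2^{3\v \log \d} \log \wt{\H}} = 2^{-n^{\gamma} \cdot 2^{\wt{O}(rk^2/\epsilon)}}$. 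Combining with $\OPT \geq \wt{\OPT}/(1+\epsilon)$ and absorbing the $(1+\epsilon)$ slack into the $\wt{O}$ yields the claim.

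The main obstacle is Step 2: the sketched cost $\sum_i \|D_i U (S D_i U)^\dagger S D_i A_{*,i} - D_i A_{*,i}\|_2^2$ a priori depends on the full $U \in \R^{n \times k}$ not only through the sketched factor $S D_{(c)} U$ but also through the unsketched factor $D_{(c)} U$, so naively one still has $\Omega(rnk)$ variables. Reducing to $O(rk^2/\epsilon)$ requires identifying a small set of sufficient statistics (in the \cite{rsw16} framework, the sketched design matrices together with their Gram-type auxiliary matrices on both sides) and verifying that the nonzero minimum of the cost, viewed as a polynomial in those statistics alone, is still at least the target lower bound; this is where one has to be most careful that relaxing variables does not collapse the infimum to zero. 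A secondary subtlety is the discretization of $S$ to $n^{\gamma}$ bits without breaking the $(1+\epsilon)$ sketching guarantee, which is precisely the source of the ``with high probability'' qualifier in the theorem statement.
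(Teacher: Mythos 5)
Your proposal follows essentially the same route as the paper's proof: sketch the row- and column-regressions down to $O(k/\epsilon)$ dimensions, observe that the $r$ distinct rows/columns of $W$ leave only $r$ distinct sketched design matrices on each side, introduce their $O(rk^2/\epsilon)$ entries as the free variables, express the alternating-minimization solutions $\wh{U}_1(x),\wh{U}_2(x)$ in closed form via Cramer's rule, and invoke Theorem~\ref{thm:jpt13} on the resulting low-variable polynomial system.

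The one substantive issue is that you flag your ``main obstacle'' in Step 2 --- that relaxing the variables might collapse the infimum to zero --- but leave it unresolved, and this is precisely the step that makes the argument go through. The resolution is short: for \emph{every} assignment $x$ of the relaxed variables (with nonvanishing denominators), $\wh{U}_1(x)\wh{U}_2(x)^\top$ is a genuine rank-$\le k$ matrix, so the rational objective $p(x)/q(x)=\|(\wh{U}_1(x)\wh{U}_2(x)^\top-A)\circ W\|_F^2$ is always $\ge \OPT>0$; combined with the sketching guarantee that some $x^\ast$ achieves value $\le(1+\epsilon)\OPT$, the minimum of the system is sandwiched in $[\OPT,(1+\epsilon)\OPT]$, is nonzero, and hence Theorem~\ref{thm:jpt13} lower-bounds $\OPT$ directly. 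Note also that the full $U$ never needs to appear as variables: the unsketched factors you worry about are eliminated because the objective itself is regrouped over the $rp$ distinct row/column patterns of $W\circ A$, so the cost is a function of only the $r$ sketched Gram matrices per side. Two smaller points: your parameter bookkeeping $\H=O(n^\gamma)$ followed by $\log\wt{\H}=\Theta(n^\gamma)$ is internally inconsistent (the coefficients have \emph{bitsize} $O(n^\gamma\cdot\poly(rk))$, i.e., magnitude $2^{O(n^\gamma rk)}$, which is what yields the $n^\gamma$ factor in the exponent); and the paper additionally handles rank-deficient $P_{1,i}P_{1,i}^\top$ by guessing ranks and independent column subsets ($2^{O(rk)}$ guesses) before applying Cramer's rule, and converts the rational objective to a polynomial one via an auxiliary variable $y$ with $q(x)y=1$, both of which your sketch omits but which are routine.
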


\begin{proof}

We use $A_{i,*} \in \R^n$ denote the $i$-th row of $A$. We use $W_{i,*} \in \R^n$ to denote the $i$-th row of $W$. Let $(U_1)_{i,*}$ denote the $i$-th row of $U_1$.  
For any $n \times k$ matrix $U_1$ and for any $k \times n$ matrix $Z_1$, we have
\begin{align*}
 & ~ \| (U_1 Z_1 - A) \circ W \|_F^2 \\
 = & ~ \sum_{i=1}^n \| (U_1)_{i,*} Z_1 \diag( W_{i,*} ) - A_{i,*} \diag( W_{i,*} ) \|_2^2.
\end{align*}

Based on the observation that $W$ has $r$ distinct rows, we use group $g_{1,1}, g_{1,2}, \cdots , g_{1,r}$ to denote $r$ disjoint sets such that
\begin{align*}
    \cup_{i=1}^r g_{1,i} = [n]
\end{align*}
For any $i \in [r]$, for any $j_1, j_2 \in g_{1,i}$, we have $W_{j_1,*} = W_{j_2,*}$.

Thus, we can have 
\begin{align*}
    & ~ \sum_{i=1}^n \| (U_1)_{i,*} Z_1 \diag( W_{i,*} ) - A_{i,*} \diag( W_{i,*} ) \|_2^2 \\
    = & ~ \sum_{i=1}^r  \sum_{\ell \in g_{1,i}} \| (U_1)_{\ell,*} Z_1 \diag( W_{\ell,*} ) - A_{\ell,*} \diag( W_{\ell,*} ) \|_2^2  .
\end{align*}
We can sketch the objective function by choosing Gaussian matrices $S_1 \in \R^{n \times s_1}$ with $s_1 = O(k/\epsilon)$.
\begin{align*}
\sum_{i=1}^n \| (U_1)_{i,*} Z_1 \diag( W_{i,*} ) S_1 - A_{i,*} \diag( W_{i,*} ) S_1 \|_2^2 .
\end{align*}
Let $\wh{U}_1$ denote the optimal solution of the sketch problem,

\begin{align*}
    & ~ \wh{U}_1 =  \arg\min_{U_1 \in \R^{n \times k}} \\
    & ~ \sum_{i=1}^n \| (U_1)_{i,*} Z_1 \diag( W_{i,*} ) S_1 - A_{i,*} \diag( W_{i,*} ) S_1 \|_2^2 .
\end{align*}
By properties of $S_1$, plugging $\wh{U}_1$ into the original problem, we obtain
\begin{align*}
& ~ \sum_{i=1}^r  \sum_{\ell \in g_{1,i}} \| (\wh{U}_1)_{\ell,*} Z_1 \diag( W_{\ell,*} ) - A_{\ell,*} \diag( W_{\ell,*} ) \|_2^2  \\
\leq & ~  (1+\epsilon) \cdot \OPT.
\end{align*}

Let $R$ denote the set of all $\S(g_{1,i})$ (for all $i \in [r]$ and $|R| = r$)

Note that $\wh{U}_1$ also has the following form, for each $\ell \in L \subset [n]$ (Note that $|L| = r p$.)
\begin{align*}
(\wh{U}_1)_{\ell, *} 
= & ~ A_{\ell,*} \diag(W_{\ell,*}) S_1 \cdot ( Z_1 \diag( W_{\ell,*} ) S_1 )^\dagger \\
= & ~  A_{\ell,*} \diag(W_{\ell,*}) S_1 \cdot ( Z_1 \diag( W_{\ell,*} ) S_1 )^\top \\
& ~ \cdot ( ( Z_1 \diag( W_{\ell,*} ) S_1 ) ( Z_1 \diag( W_{\ell,*} ) S_1 )^\top )^{-1}.
\end{align*}

Recall the number of different $ \diag(W_{\ell,*})$ is at most $r$.

For each $k \times s_1$ matrix $Z_1 \diag( W_{\ell,*}) S_1$, we create $k \times s_1$ variables to represent it. Thus, we create $r$ matrices,
\begin{align*}
\{ Z_1 D_{W_{i,*}} S_1 \}_{i \in R}.
\end{align*}
For simplicity, let $P_{1,i} \in \R^{k \times s_1}$ denote $Z_1 \diag(W_{i,*}) S_1$. Then we can rewrite $\wh{U}^i$ as follows
\begin{align*}
    \wh{U}_1^i = A_{i,*} \diag(W_{i,*}) S_1  \cdot P_{1,i}^\top (P_{1,i} P_{1,i}^\top )^{-1}.
\end{align*}
If $P_{1,i} P_{1,i}^\top \in \R^{k \times k}$ has rank-$k$, then we can use Cramer's rule to write down the inverse of $P_{1,i} P_{1,i}^\top$.
For the situation, it is not full rank. We can guess the rank. Let $t_i \leq k$ denote the rank of $P_{1,i}$. Then, we need to figure out a maximal linearly independent subset of columns of $P_{1,i}$. We can also guess all the possibilities, which is at most $2^{O(k)}$. Because we have $r$ different $P_{1,i}$, the total number of guesses we have is at most $2^{O(rk)}$. Thus, we can write down $(P_{1,i} P_{1,i}^\top)^{-1}$ according to Cramer's rule. Note that $(P_{1,i} P_{1,i}^\top)^{-1}$ can be view as $P_a/P_b$ where $P_a$ is a polynomial and $P_b$ is another polynomial which is essentially $\det(P_{1,i} P_{1,i}^\top)$.

After $\wh{U}_1$ is obtained, we fix $\wh{U}_1$. We consider 
\begin{align*}
    & ~ \wh{U}_2 =  \arg\min_{U_2 \in \R^{n \times k}}  \| (\wh{U}_1 U_2^\top - A) \circ W \|_F^2 .
\end{align*}
In a similar way, we can get and write $\wh{U}_2$. 

Overall, by creating $l = O(rk^2/\epsilon)$ variables, we have rational polynomials $\wh{U}_1(x)$ and $\wh{U}_2(x)$. Note that $\wh{U}_1(x)$ only has $rp$ different rows, and same for $\wh{U}_2(x)$.

 Indeed, now we have only $2 r$ distinct denominators (w.l.o.g., assume the first $r$ columns are distinct and the first $r$ rows are distinct),
\begin{align*}
h_{1,i}(x) & = \det ( P_{1,i} P_{1,i}^\top ), \forall i \in[r] \\
h_{2,i}(x) & = \det ( P_{2,i} P_{2,i}^\top ), \forall i \in[r].
\end{align*}

Then, we can write down the following optimization problem,
\begin{align*}
\min _{x \in \R^l} & ~ p(x) / q(x) \\
\mathrm{s.t. } & ~ h_{1,i}^2(x) \neq 0, h_{2,i}^2(x) \neq 0, \forall i \in[r], \\
& ~ q(x)=\prod_{i=1}^r h_{1,i}^2(x) h_{2,i}^2(x),
\end{align*}
where $q(x)$ has degree $O(r k)$, the maximum coefficient in absolute value is 
\begin{align*}
( 2^{n^{\gamma}} )^ {O (r k )}
\end{align*}
 and the number of variables $O (r k^2 / \epsilon )$. However, that formulation $p(x)/q(x)$ is not a polynomial, to further make it a polynomial, we introduce variable $y$:
 \begin{align*}
\min _{x \in \R^l} & ~ p(x) y \\
\mathrm{s.t. } & ~ h_{1,i}^2(x) \neq 0, h_{2,i}^2(x) \neq 0, \forall i \in[r], \\
& ~ q(x)=\prod_{i=1}^r h_{1,i}^2(x) h_{2,i}^2(x) \\
& ~ q(x)y - 1 =0.
\end{align*}

Applying Theorem~\ref{thm:jpt13}, with parameters
\begin{align*}
& \m= O(r), \v = O(rk^2/\epsilon),  \d = O(r) , \\
& \H = 2^{O( n^{\gamma} rk ) } ,  \wt{H} = O(\H) .
\end{align*}

we can achieve the following minimum nonzero cost: 
\begin{align*}
\geq  ~ 2^{-2^{3 \v \log (\d)} \log( \wt{\H}) } 
\geq  ~ 
2^{-n^\gamma 2^{\wt{O} (r k^2 / \epsilon )}  } .
\end{align*}
 Thus, we complete the proofs.
\end{proof}

\section{FEW DISTINCT COLUMNS}\label{sec:few_distinct_columns}

In this section, we try to estimate the $\OPT$ value. 

\begin{theorem}\label{thm:few_distinct_columns}
Given a matrix $A$ and $W$, each entry can be written using $O(n^{\gamma})$ bits for $\gamma >0$.
Given $A \in \R^{n \times n}, W \in \R^{n \times n}, 1 \leq k \leq n$.  Assume that $W$ has $r$ distinct columns and rows.
   
Then, with high probability,  
one can output a number $\Lambda$ in time $n^{1+\gamma} \cdot p \cdot 2^{O (k^2 r / \epsilon )} $ such that 
\begin{align*}
    \OPT \leq \Lambda \leq(1+\epsilon) \OPT .
\end{align*}
\end{theorem}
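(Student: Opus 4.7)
The plan is to combine an outer binary search on the value of $\OPT$ with an inner polynomial-system decision query that uses only $O(rk^2/\epsilon)$ variables. Using Theorem~\ref{thm:lower_bound_on_cost} and Lemma~\ref{lem:opt}, we can sandwich $\OPT\in[2^{-n^\gamma 2^{\wt{O}(rk^2/\epsilon)}},\;\poly(n)\cdot 2^{n^\gamma}]$, so a multiplicative $(1+\epsilon)$ binary search on a candidate $\Lambda$ terminates in $O(n^\gamma\cdot 2^{\wt{O}(rk^2/\epsilon)}/\epsilon)$ steps. Each step asks whether there exist $U,V$ with $\|W\circ(UV^\top-A)\|_F^2\leq \Lambda$, which we feed into the decision procedure of Theorem~\ref{thm:decision_problem}.

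The technical heart is compressing this decision query. Rather than the $2nk$ entries of $U,V$ used in Lemma~\ref{lem:warmup}, we reuse the sketching construction from Theorem~\ref{thm:lower_bound_on_cost}: sample Gaussian matrices $S_1,S_2\in\R^{n\times O(k/\epsilon)}$ and exploit the closed form
\begin{align*}
(\wh U_1)_{\ell,*} = A_{\ell,*}\diag(W_{\ell,*})S_1\cdot P_{1,i}^\top(P_{1,i}P_{1,i}^\top)^{-1},
\end{align*}
where $P_{1,i}=Z_1\diag(W_{i,*})S_1$ depends on $\ell$ only through which of the $r$ distinct rows of $W$ it belongs to. Declaring fresh variables for the $r$ blocks $P_{1,i}$ (and symmetrically $P_{2,j}$) costs only $O(rk^2/\epsilon)$ variables. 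The $k\times k$ inverses are handled via Cramer's rule after multiplying through by $q(x)=\prod_i h_{1,i}^2(x)h_{2,i}^2(x)$ and adjoining an auxiliary variable $y$ with $q(x)y-1=0$ so that the objective is a genuine polynomial, and rank-deficient $P_{1,i}P_{1,i}^\top$ are absorbed by enumerating the $2^{O(rk)}$ choices of rank and maximal independent column subsets.

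The new ingredient relative to Theorem~\ref{thm:lower_bound_on_cost}, and what ultimately removes the $n$-factor, is using Assumption~\ref{ass:W} to collapse the $n$-term cost sum. Because $W$ has $r$ distinct rows and $A\circ W$ has at most $rp$ distinct rows, the objective can be rewritten as $O(rp)$ group sums, each weighted by the multiplicity of its pattern; pre-computing these multiplicities costs $n\cdot p$ per binary-search step. The resulting system has $\v=O(rk^2/\epsilon)$, $\m=O(r)$, $\d=O(rk)$, and $\H=2^{O(n^\gamma rk)}$, and Theorem~\ref{thm:decision_problem} runs in $(\m\d)^{O(\v)}\poly(\H)=n^{O(\gamma)}\cdot 2^{O(k^2r/\epsilon)}$ per query. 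Multiplying the binary-search iterations, the rank guesses, and the $n\cdot p$ assembly yields the claimed $n^{1+\gamma}\cdot p\cdot 2^{O(k^2r/\epsilon)}$.

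The main obstacle we expect is exactly this grouping step. The naive row-by-row cost has $\Theta(n)$ polynomial summands, and even a mild dependence on $n$ in the constraint count or coefficient bitsize would wash out the savings from reducing the variable count. Making the collapse rigorous requires that the closed-form row $(\wh U_1)_{\ell,*}$ depend on $\ell$ only through its $W$-pattern, so that multiplying by $\diag(W_{\ell,*})$ and differencing against the $A\circ W$ row produces only $O(rp)$ distinct polynomial contributions; once that is in place, the rest of the proof is a routine instantiation of Theorem~\ref{thm:decision_problem} and assembly of the binary search.
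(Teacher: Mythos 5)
Your proposal follows essentially the same route as the paper's proof: the same grouping of the $n$ rows/columns into $rp$ pattern classes with multiplicity weights, the same sketch-and-solve closed form $(\wh U_1)_{\ell,*}=A_{\ell,*}\diag(W_{\ell,*})S_1 P_{1,i}^\top(P_{1,i}P_{1,i}^\top)^{-1}$ compressed into $O(rk^2/\epsilon)$ fresh variables for the $r$ blocks, the same Cramer's-rule/rank-guessing/auxiliary-$y$ treatment, and the same binary search bracketed by Lemma~\ref{lem:opt} and Theorem~\ref{thm:lower_bound_on_cost}. You also correctly identify the grouping of the objective into $O(rp)$ weighted summands as the ingredient that removes the $\Theta(n)$ per-query cost, which is exactly what the paper emphasizes as new in this theorem.
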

\begin{proof}

Let $U_1^*, U_2^* \in \R^{n \times k}$ denote the matrices satisfying 
\begin{align*}
    \| W \circ (U_1^* (U_2^*)^\top - A) \|_F^2 = \OPT.
\end{align*}

We use $A_{i,*} \in \R^n$ denote the $i$-th row of $A$. We use $W_{i,*} \in \R^n$ to denote the $i$-th row of $W$. Let $(U_1)_{i,*}$ denote the $i$-th row of $U_1$. Let $Z_1 = (U_2^*)^\top$.  
For any $n \times k$ matrix $U_1$,
\begin{align*}
 & ~ \| (U_1 Z_1 - A) \circ W \|_F^2 \\
 = & ~ \sum_{i=1}^n \| (U_1)_{i,*} Z_1 \diag( W_{i,*} ) - A_{i,*} \diag( W_{i,*} ) \|_2^2.
\end{align*}

Based on the observation that $W$ has $r$ distinct rows, we use group $g_{1,1}, g_{1,2}, \cdots , g_{1,r}$ to denote $r$ disjoint sets such that
\begin{align*}
    \cup_{i=1}^r g_{1,i} = [n]
\end{align*}
For any $i \in [r]$, for any $j_1, j_2 \in g_{1,i}$, we have $W_{j_1,*} = W_{j_2,*}$.

Next, based on assumptions on $W$ and $A$, we use $g_{1,i,1}$, $g_{1,i,2}$, $g_{1,i,p}$ to denote $p$ groups such that 
\begin{align*}
    \cup_{j=1}^p g_{1,i,j} = g_{1,i} .
\end{align*}
For any $i \in [r]$, for any $j \in [p]$, for any $\ell_1, \ell_2 \in g_{1,i,j}$, we have  $(W_{\ell_1,*} \circ A_{\ell_1,*}) = (W_{\ell_2,*} \circ A_{\ell_2,*})$.

Let $\S(g_{1,i,j})$ denote the smallest index from set $g_{1,i,j}$

Thus, we can have 
\begin{align*}
    & ~ \sum_{i=1}^n \| (U_1)_{i,*} Z_1 \diag( W_{i,*} ) - A_{i,*} \diag( W_{i,*} ) \|_2^2 \\
    = & ~ \sum_{i=1}^r \sum_{j =1}^p \sum_{\ell \in g_{1,i,j}} \| (U_1)_{\ell,*} Z_1 \diag( W_{\ell,*} ) \\
     & ~ \quad\quad\quad\quad\quad\quad - A_{\ell,*} \diag( W_{\ell,*} ) \|_2^2 \\
    = & ~\sum_{i=1}^r \sum_{j =1}^p |g_{1,i,j}| \cdot \| (U_1)_{\S(g_{1,i,j}),*} Z_1 \diag( W_{ \S(g_{1,i,j}),*} ) \\
    & ~ \quad\quad\quad\quad\quad\quad - A_{\S(g_{1,i,j}),*} \diag( W_{\S(g_{1,i,j}),*} ) \|_2^2 .
\end{align*}
We can sketch the objective function by choosing Gaussian matrices $S_1 \in \R^{n \times s_1}$ with $s_1 = O(k/\epsilon)$.
\begin{align*}
& ~ \sum_{i=1}^r \sum_{j =1}^p |g_{1,i,j}| \cdot \| (U_1)_{\S(g_{1,i,j}),*} Z_1 \diag( W_{ \S(g_{1,i,j}),*} ) S_1 \\
& ~ \quad\quad\quad\quad\quad\quad - A_{\S(g_{1,i,j}),*} \diag( W_{\S(g_{1,i,j}),*} ) S_1 \|_2^2.
\end{align*}
Let $\wh{U}_1$ denote the optimal solution of the sketch problem,
\begin{align*}
    \wh{U}_1 = & ~ \arg\min_{U_1} \sum_{i=1}^r \sum_{j =1}^p |g_{1,i,j}| \\
    & ~ \cdot \| (U_1)_{\S(g_{1,i,j}),*} Z_1 \diag( W_{ \S(g_{1,i,j}),*} ) S_1 \\
    & ~ \quad - A_{\S(g_{1,i,j}),*} \diag( W_{\S(g_{1,i,j}),*} ) S_1 \|_2^2.
\end{align*}
By properties of $S_1$, plugging $\wh{U}_1$ into the original problem, we obtain
\begin{align*}
& ~ \sum_{i=1}^r \sum_{j =1}^p |g_{1,i,j}| \cdot \| (U_1)_{\S(g_{1,i,j}),*} Z_1 \diag( W_{ \S(g_{1,i,j}),*} ) \\
& ~ \quad\quad - A_{\S(g_{1,i,j}),*} \diag( W_{\S(g_{1,i,j}),*} ) \|_2^2 \leq (1+\epsilon) \cdot \OPT.
\end{align*}

Let $R$ denote the set of all $\S(g_{1,i})$ (for all $i \in [r]$ and $|R| = r$).

Let $L$ denote the set of all $\S(g_{1,i,j})$ (for all $i \in [r]$, $j \in [p]$ and $|L| = rp$).

Note that $\wh{U}_1$ also has the following form, for each $\ell \in L \subset [n]$ (Note that $|L| = r p$.)
\begin{align*}
(U_1)_{\ell, *} 
= & ~ A_{\ell,*} \diag(W_{\ell,*}) S_1 \cdot ( Z_1 \diag( W_{\ell,*} ) S_1 )^\dagger \\
= & ~  A_{\ell,*} \diag(W_{\ell,*}) S_1 \cdot ( Z_1 \diag( W_{\ell,*} ) S_1 )^\top \\
& ~ \cdot ( ( Z_1 \diag( W_{\ell,*} ) S_1 ) ( Z_1 \diag( W_{\ell,*} ) S_1 )^\top )^{-1}.
\end{align*}

Recall the number of different $A_{\ell,*} \diag(W_{\ell,*})$ is at most $rp$, and the number of different $ \diag(W_{\ell,*})$ is at most $r$. For each $k \times s_1$ matrix $Z_1 \diag( W_{\ell,*}) S_1$, we create $k \times s_1$ variables to represent it. Thus, we create $r$ matrices,
\begin{align*}
\{ Z_1 \diag( W_{i,*} ) S_1 \}_{i \in R}. 
\end{align*}
For simplicity, let $P_{1,i} \in \R^{k \times s_1}$ denote $Z_1 \diag(W_{i,*}) S_1$. Then we can rewrite $(\wh{U}_1)_{i,*}$ as follows
\begin{align*}
    (\wh{U}_1)_{i,*} = A_{i,*} \diag(W_{i,*}) S_1  \cdot P_{1,i}^\top (P_{1,i} P_{1,i}^\top )^{-1}.
\end{align*}
If $P_{1,i} P_{1,i}^\top \in \R^{k \times k}$ has rank-$k$, then we can use Cramer's rule to write down the inverse of $P_{1,i} P_{1,i}^\top$.
For the situation, it is not full rank. We can guess the rank. Let $t_i \leq k$ denote the rank of $P_{1,i}$. Then, we need to figure out a maximal linearly independent subset of columns of $P_{1,i}$. We can also guess all the possibilities, which is at most $2^{O(k)}$. Because we have $r$ different $P_{1,i}$, the total number of guesses we have is at most $2^{O(rk)}$. Thus, we can write down $(P_{1,i} P_{1,i}^\top)^{-1}$ according to Cramer's rule. Note that $(P_{1,i} P_{1,i}^\top)^{-1}$ can be view as $P_a/P_b$ where $P_a$ is a polynomial and $P_b$ is another polynomial which is essentially $\det(P_{1,i} P_{1,i})$.

After $\wh{U}_1$ is obtained, we will fix $\wh{U}_1$ in the next round.

For any $n \times k$ matrix $U_2$, we can rewrite 
\begin{align*}
 & ~ \| (\wh{U}_1 U_2^\top  - A ) \circ \|_F^2 \\
 = & ~ \sum_{i=1}^n \| \diag(  W_{*,i} ) \wh{U}_1   (U_2^\top)_{*,i}   - \diag(W_{*,i}) A_{*,i} \|_2^2 .
\end{align*}

Based on the observation that $W$ has $r$ columns rows, we use group $g_{2,1}, g_{2,2}, \cdots , g_{2,r}$ to denote $r$ disjoint sets such that
\begin{align*}
    \cup_{i=1}^r g_{2,i} = [n].
\end{align*}
For any $i \in [r]$, for any $j_1, j_2 \in g_{2,i}$, we have $W_{*,j_1} = W_{*,j_2}$.

Next, based on assumptions on $W$ and $A$, we use $g_{2,i,1}$, $g_{2,i,2}$, $g_{2,i,p}$ to denote $p$ groups such that 
\begin{align*}
    \cup_{j=1}^p g_{2,i,j} = g_{2,i}.
\end{align*}
For any $i \in [r]$, for any $j \in [p]$, for any $\ell_1, \ell_2 \in g_{2,i,j}$, we have  $(W_{*,\ell_1} \circ A_{*,\ell_1}) = (W_{*,\ell_2} \circ A_{*,\ell_2})$.

Let $\S(g_{2,i,j})$ denote the smallest index from set $g_{2,i,j}$

Thus, we can have 
\begin{align*}
    & ~ \sum_{i=1}^n \| \diag( W_{*,i} ) \wh{U}_1   (U_2^\top )_{*,i}  -  \diag( W_{*,i} ) A_{*,i} \|_2^2 \\
    = & ~ \sum_{i=1}^r \sum_{j =1}^p \sum_{\ell \in g_{1,i,j}} \|  \diag( W_{*,\ell} ) \wh{U}_1 (U_2^\top )_{*,\ell} \\
    & ~ \quad\quad\quad\quad\quad\quad - \diag( W_{*,\ell} ) A_{*,\ell}  \|_2^2 \\
    = & ~\sum_{i=1}^r \sum_{j =1}^p |g_{2,i,j}| \cdot \|  \diag( W_{ *, \S(g_{2,i,j}) } ) \wh{U}_1 (U_2^\top )_{*, \S(g_{2,i,j})} \\
    & ~ \quad\quad\quad\quad\quad\quad -  \diag( W_{*, \S(g_{1,i,j})} ) A_{*,\S(g_{1,i,j})} \|_2^2 .
\end{align*}
We can sketch the objective function by choosing Gaussian matrices $S_2 \in \R^{n \times s_1}$ with $s_2 = O(k/\epsilon)$.
\begin{align*}
& ~ \sum_{i=1}^r \sum_{j =1}^p |g_{2,i,j}| \cdot \| S_2  \diag( W_{ \S(g_{2,i,j}),*} ) \wh{U}_1  (U_2^\top )_{*, \S(g_{2,i,j}) } \\
& ~ \quad\quad\quad\quad - S_2 \diag( W_{*, \S(g_{2,i,j})} )  A_{*, \S(g_{2,i,j}) }  \|_2^2.
\end{align*}
Let $\wh{U}_1$ denote the optimal solution of the sketch problem,
\begin{align*}
    \wh{U}_1 = & ~ \arg\min_{U_1} \sum_{i=1}^r \sum_{j =1}^p |g_{2,i,j}| \\
    & ~  \cdot \| S_2  \diag( W_{ \S(g_{2,i,j}),*} ) \wh{U}_1  (U_2^\top )_{*, \S(g_{2,i,j}) } \\
    & ~ - S_2 \diag( W_{*, \S(g_{2,i,j})} )  A_{*, \S(g_{2,i,j}) }  \|_2^2.
\end{align*}
By properties of $S_1$, plugging $\wh{U}_2$ into the original problem, we obtain
\begin{align*}
& ~ \sum_{i=1}^r \sum_{j =1}^p |g_{2,i,j}| \cdot \|  \diag( W_{ \S(g_{2,i,j}),*} ) \wh{U}_1  ( \wh{U}_2^\top )_{*, \S(g_{2,i,j}) } \\
& ~ -  \diag( W_{*, \S(g_{2,i,j})} )  A_{*, \S(g_{2,i,j}) }  \|_2^2\leq (1+\epsilon) \cdot \OPT.
\end{align*}

Let $R$ denote the set of all $\S(g_{1,i})$ (for all $i \in [r]$ and $|R| = r$).

Let $L$ denote the set of all $\S(g_{1,i,j})$ (for all $i \in [r]$, $j \in [p]$ and $|L| = rp$).

Note that $\wh{U}_1$ also has the following form, for each $\ell \in L \subset [n]$ (Note that $|L| = r p$.)
\begin{align*}
(U_1)_{\ell, *} 
= & ~  ( S_2 \diag( W_{*,\ell} ) \wh{U}_1 )^\dagger \cdot S_2 \diag(W_{*,\ell})  A_{*,\ell} \\
= & ~ (  ( S_2 \diag( W_{*,\ell} ) \wh{U}_1 ) ( S_2 \diag( W_{*,\ell} ) \wh{U}_1)^\top )^{-1} \\
& ~ \cdot S_2 \diag( W_{*,\ell} ) \wh{U}_1 \cdot S_2 \diag(W_{*,\ell})  A_{*,\ell}.
\end{align*}

Recall the number of different $A_{*,\ell} \diag(W_{*,\ell})$ is at most $rp$, and the number of different $ \diag(W_{*,\ell})$ is at most $r$.

For each $s_2 \times k$ matrix $S_2 \diag( W_{*,i} ) \wh{U}_1$, we create $s_2 \times k$ variables to represent it. Thus, we create $r$ matrices,
\begin{align*}
\{ S_2 \diag( W_{*,i} ) \wh{U}_1 \}_{i \in R}.
\end{align*}
For simplicity, let $P_{2,i} \in \R^{k \times s_2}$ denote $S_2 \diag( W_{*,i} ) \wh{U}_1$. Then we can rewrite $\wh{U}_2$ as follows
\begin{align*}
    (\wh{U}_2^\top)_{*,i} = (P_{2,i} P_{2,i}^\top)^{-1}  P_{2,i} S_2 \diag(W_{*,i})  A_{*,i}.
\end{align*}

In a similar way, we can write $\wh{U}_2$. Overall, by creating $l = O(rk^2/\epsilon)$ variables, we have rational polynomials $\wh{U}_1(x)$ and $\wh{U}_2(x)$. Note that $\wh{U}_1(x)$ only has $rp$ different rows, and same for $\wh{U}_2(x)$.

Putting it all together, we can write the objective function,
\begin{align*}
\min_{x \in \R^l} & ~ \| ( \wh{U}_1(x) \wh{U}_2(x)^\top - A ) \circ W \|_F^2 \\
\mathrm{~s.t.~}& ~ h_{1,i}(x) \neq 0 ,  \forall i \in [r] \\
& ~ h_{2,i}(x) \neq 0, \forall i \in [r].
\end{align*}

Note that $\wh{U}_1(x) \wh{U}_{2}(x) \circ W$ only has $rp$ distinct rows. Also, $A \circ W$ only has $rp$ distinct rows. Writing down the objective function $\| ( \wh{U}_1(x) \wh{U}_2(x)^\top - A ) \circ W \|_F^2$ only requires $n (rp) \cdot \poly(kr/\epsilon)$ time.

Combining the binary search explained in Lemma~\ref{lem:warmup} 
with the lower bound on cost (Theorem~\ref{thm:lower_bound_on_cost}) we obtained, we can find the solution for the original problem in time,
\begin{align*}
(np \cdot \poly(kr/\epsilon) + n 2^{\wt{O} (rk^2 / \epsilon) }  ) \cdot n^{\gamma} = n^{1+\gamma} p \cdot 2^{\wt{O}(rk^2/\epsilon)}.
\end{align*}

\end{proof}
\section{RECOVER A SOLUTION}\label{sec:recover_solution}

We state our results and proof of recovering a solution.
\begin{theorem}\label{thm:recover_solution}
Given a matrix $A$ and $W$, each entry can be written using $O(n^{\gamma})$ bits for $\gamma >0$.
Given $A \in \R^{n \times n}, W \in \R^{n \times n}, 1 \leq k \leq n$ and $ \epsilon \in (0,0.1)$. 
Assume $W$ has $r$ distinct columns and rows. 
Assume $A \circ W$ has at most $r \cdot p$ distinct columns and at most $r \cdot p$ distinct rows. 
Let $\OPT$ be defined as Definition~\ref{def:opt}. 
Then, with high probability,  
one can output two matrices $U,V \in \R^{n \times k}$ in time $n^{1+\gamma} \cdot 2^{O (k^2 r / \epsilon )} $ such that 
\begin{align*}
    \| (UV^\top- A) \circ W \|_F^2 \leq(1+\epsilon) \OPT .
\end{align*}

Further, if we choose (1) $k^2 r = O(\log n /\log\log n)$, (2) $\epsilon \in (0,0.1)$ to be a small constant, (3) $p = n^{o(1)}$ and (4) $\gamma = o(1)$, then the running time becomes $n^{1+o(1)}$.
\end{theorem}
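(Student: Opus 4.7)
The plan is to combine the cost estimator of Theorem~\ref{thm:few_distinct_columns} with a coordinate-wise binary search over the compact polynomial system constructed in its proof. First I would invoke Theorem~\ref{thm:few_distinct_columns} to obtain a number $\Lambda$ satisfying $\OPT \leq \Lambda \leq (1+\epsilon)\OPT$ in time $n^{1+\gamma} p \cdot 2^{O(k^2 r/\epsilon)}$; this gives a target window on the objective value for the subsequent decision-solver calls.

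Next I would recover values $x_1^*,\ldots,x_\ell^*$ for the $\ell = O(rk^2/\epsilon)$ variables used in the proof of Theorem~\ref{thm:few_distinct_columns}. These variables parameterize the sketched matrices $P_{1,i},P_{2,i}$ (together with discrete guesses for the ranks and for the maximal linearly independent column subsets needed by Cramer's rule), and the rational-polynomial expressions $\wh{U}_1(x)$ and $\wh{U}_2(x)$ depend only on them. Once $x^*$ is known, I would recover the $rp$ distinct rows of $\wh{U}_1(x^*)$ and $\wh{U}_2(x^*)$ by direct evaluation, then copy each distinct row into the appropriate group of the full $n\times k$ matrices $U$ and $V$ in $O(nk)$ time.

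To extract each $x_j^*$, I would binary-search its value by repeatedly calling the decision solver of Theorem~\ref{thm:decision_problem} on the system consisting of: the current bracket constraint $x_j\in[a,b]$, the non-vanishing determinant constraints $h_{1,i}(x),h_{2,i}(x)\neq 0$ for $i\in[r]$, an auxiliary variable $y$ with $q(x)y-1=0$ to clear denominators, and the objective bound $p(x)y\leq(1+\epsilon)\Lambda$. The depth of the binary search is governed by the cost lower bound of Theorem~\ref{thm:lower_bound_on_cost}, which guarantees any nonzero achievable cost is at least $2^{-n^\gamma 2^{\wt O(rk^2/\epsilon)}}$, combined with the upper bound $\poly(n)\cdot 2^{n^\gamma}$ from Lemma~\ref{lem:opt}; hence $O(n^\gamma\cdot 2^{\wt O(rk^2/\epsilon)})$ search steps per variable suffice. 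Each decision call, with parameters $\m=O(r)$, $\v=O(rk^2/\epsilon)$, $\d=O(r)$, $\H=2^{O(n^\gamma rk)}$, runs in time $2^{\wt O(rk^2/\epsilon)}\cdot n^\gamma$.

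The main obstacle, and the reason Theorem~\ref{thm:few_distinct_columns} is ``brand new'' relative to \cite{rsw16}, is writing down the objective polynomial $\|(\wh{U}_1(x)\wh{U}_2(x)^\top - A)\circ W\|_F^2$ inside each decision call without paying $\Omega(n^2)$ time per call. The payoff of assuming that $W$ has $r$ distinct rows/columns and $A\circ W$ has at most $rp$ distinct rows/columns is that this Frobenius norm collapses to a weighted sum indexed by the $rp\times rp$ grid of group representatives, where each weight $|g_{1,i,j}|\cdot|g_{2,i',j'}|$ is a single $O(\log n)$-bit integer that can be precomputed in $O(n)$ time once and for all. Thus the polynomial has only $(rp)^2\poly(kr/\epsilon)=n^{o(1)}$ terms, and the total running time becomes $n^{1+\gamma}p\cdot 2^{\wt O(rk^2/\epsilon)}$ for the $\Lambda$-estimation step plus $\ell\cdot n^\gamma\cdot 2^{\wt O(rk^2/\epsilon)}$ for the coordinate-wise search, which simplifies to $n^{1+\gamma}\cdot 2^{O(k^2r/\epsilon)}$ and further to $n^{1+o(1)}$ under the stated regime $k^2r=O(\log n/\log\log n)$, $p=n^{o(1)}$, $\gamma=o(1)$.
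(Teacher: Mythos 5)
Your proposal follows essentially the same route as the paper's proof: first obtain $\Lambda$ from Theorem~\ref{thm:few_distinct_columns}, then perform a coordinate-wise binary search driven by the decision solver of Theorem~\ref{thm:decision_problem} over the semialgebraic set $\{x : q(x)\neq 0,\ p(x)\leq \Lambda q(x)\}$ built in that proof, with the search depth controlled by the lower bound of Theorem~\ref{thm:lower_bound_on_cost} and the upper bound of Lemma~\ref{lem:opt}, and with the few-distinct-rows/columns structure keeping each polynomial system writable in $n^{1+o(1)}$ time. The only deviation is that you binary-search the $\ell=O(rk^2/\epsilon)$ underlying variables $x_j$ and then evaluate $\wh{U}_1(x^*),\wh{U}_2(x^*)$, whereas the paper binary-searches the entries $U_{i,j}(x)$ of the output factors (adding interval and then equality constraints on these rational functions to $S$); this is an immaterial variation, and both versions share the same lightly justified step of declaring a coordinate to be an arbitrary point of its final interval.
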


\begin{proof}

Here, we show how to recover an approximate solution, not only the value of $\OPT$.

The idea is to recover the entries of $U$ and $V$ one by one and use the algorithm from the previous section for the corresponding decision problem. We initialize the semialgebraic set to be
\begin{align*}
    S= \{x \in \R^l \mid q(x) \neq 0, p(x) \leq \Lambda q(x) \}.
\end{align*}
We start by recovering the first entry of $U$. We perform the binary search to localize the entry, which takes $ \log  ( 2^{n^{\gamma}} )$ invocations of the decision algorithm. For each step of binary search, we use Theorem~\ref{thm:jpt13} to determine whether the following semi-algebraic set $S$ is empty or not,
\begin{align*}
S \cap \{U_{1,1}(x) \geq \wh{U}_{1,1}^{-}, U_{1,1}(x) \leq \wh{U}_{1,1}^{+} \}.
\end{align*}
After that, we declare the first entry of $U$ to be any point in this interval.  Then, we add an equality constraint that fixes the entry of $\wh{U}$ to this value and add a new constraint into $S$ permanently, e.g., $S  \leftarrow S \cap \{U_{1,1}(x)=\wh{U}_{1,1} \}$. Next, we repeat the same with the second entry of $U$ and so on.

This allows us to recover a solution of cost at most $(1+ \epsilon) \OPT$ in time
\begin{align*}
n^{1+\gamma} \cdot p \cdot 2^{O (k^{2} r / \epsilon )} .
\end{align*}
If we choose $\gamma = o(1)$, $\epsilon = \Theta(1)$, $p =n^{o(1)}$ and $k^2 r = O(\log n / \log\log n )$, then the running time becomes $n^{1+o(1)}$.
Thus, we complete the proof.
\end{proof}

\section{CONCLUSION}\label{sec:conclusion}

We showed that the weighted low-rank approximation problem could be solved in almost linear $n^{1+o(1)}$ time when the weighted matrix $W$ has few distinct columns and rows and $W \circ A$ has few distinct columns and rows. This demonstrates that truly subquadratic time is achievable for a dense regime not previously known to be tractable. 
Future work could generalize the assumptions and explore applications of the algorithm.

\section{LIMITATIONS}\label{sec:limitation}
Although our algorithm can achieve a good approximation ratio in theory, it may not achieve the ideal result in practical implementation due to various factors such as computational resource limitation, data quality problems, or incomplete model specification.

\section{SOCIETAL IMPACT}\label{sec:impact}
In this paper, we introduce an algorithm that can solve weighted low-rank approximation problems in near-linear time under certain conditions. Our paper is purely theoretical and empirical in nature (a mathematics problem), and thus, we foresee no immediate negative ethical impact. 
Our algorithms can improve the efficiency of data processing and model training, allowing complex algorithms to be run in resource-limited environments and accelerating scientific research and technological innovation. By reducing computing requirements, energy consumption can be reduced, and the environment can benefit.

\ifdefined\isarxiv
% \section*{Acknowledgement}
% Research is partially supported by the National Science Foundation (NSF) Grants 2023239-DMS, CCF-2046710, and Air Force Grant FA9550-18-1-0166.
\else
\bibliography{ref}
\bibliographystyle{plainnat}
\input{20_checklist}
\fi

% \newpage
% \onecolumn
% \appendix

% \ifdefined\isarxiv
% \begin{center}
% 	\textbf{\LARGE Appendix }
% \end{center}
% \else
% \aistatstitle{When Can We Solve the Weighted Low Rank Approximation Problem in Truly Subquadratic Time?: \\
% Supplementary Materials}
% {\hypersetup{linkcolor=black}
% %\tableofcontents
% \bigbreak
% \bigbreak
% \bigbreak
% }
% \fi
% \input{app_preli}
% \input{app_recover_solution}

\ifdefined\isarxiv
%\section*{Acknowledgments}
\bibliographystyle{alpha}
\bibliography{ref}

\else

\fi

%%%% Cut-line between first 10 pages and appendix

%%% some writing rules

%% Writing rule for creating tags.
%% Tags :
%% Theorem    \ref{thm:bla_bla}
%% Lemma      \ref{lem:bla_bla}
%% Claim      \ref{cla:bla_bla}
%% Corollary  \ref{cor:bla_bla}
%% Fact       \ref{fac:bla_bla}
%% Definition \ref{def:bla_bla}
%% Section    \ref{sec:bla_bla}
%% Subsection \ref{sub:bla_bla}
%% Equation   \ref{eq:bla_bla}

\end{document}